\journal{Annals of Physics}
\setlist[itemize]{leftmargin=*}
\theoremstyle{definition}
\newtheorem{prop}{Proposition}
\theoremstyle{definition}
\theoremstyle{definition}	 
\theoremstyle{definition}
\theoremstyle{definition} 
\theoremstyle{definition} 
\theoremstyle{definition} 
\renewenvironment{proof}{{\bfseries \noindent Proof.\hspace{3mm}}}{\qed\\} 
\renewcommand{\p@subsection}{}
\renewcommand{\p@subsubsection}{}
\definecolor{title}{RGB}{0,0,0}
\definecolor{other}{RGB}{0,0,0}
\definecolor{name}{RGB}{0,0,0}
\definecolor{phd}{RGB}{0,0,0}
\definecolor{applegreen}{rgb}{0.0, 0.71, 0.0}
\newcommand{\hvu}{\boldsymbol{\mathcal{H}}_{0}}
\newcommand{\hvup}{\boldsymbol{\mathcal{H}}_{_{V_0}}}
\begin{document}
\begin{frontmatter}

\title{Hyperspherical ${\delta\text{-}\delta^\prime}$ potentials}
\author[valladolid]{J. M. Mu{\~n}oz-Casta{\~n}eda}\ead{jose.munoz.castaneda@uva.es}

\author[valladolid,imuva]{L.M. Nieto}
\ead{luismiguel.nieto.calzada@uva.es}
\author[valladolid]{C. Romaniega}
\ead{cesar.romaniega@alumnos.uva.es}

%
\address[valladolid]{Departamento de F\'{\i}sica Te\'{o}rica, At\'{o}mica y \'{O}ptica,
Universidad de Valladolid, 47011 Valladolid, Spain}
\address[imuva]{IMUVA- Instituto de Matematicas, Universidad de Valladolid, 47011 Valladolid, Spain}
%

\begin{abstract}
The spherically symmetric potential $a \,\delta  (r-r_0)+b\,\delta ' (r-r_0)$ is generalised for the $d$-dimensional space as a characterisation of a unique selfadjoint extension of the free Hamiltonian. For this extension of the Dirac delta, the spectrum of negative, zero and positive energy states is studied in $d\geq 2$, providing numerical results for the expectation value of the radius as a function of the free parameters of the potential. Remarkably, only if $d=2$ the $\delta$-$\delta'$ potential for arbitrary $a>0$ admits a bound state with zero angular momentum.

\end{abstract}

\begin{keyword}

Contact interactions; selfadjoint extensions; singular potentials; spherical potentials

\end{keyword}

\end{frontmatter}
\section{Introduction}
The presence of boundaries has played a central role in many areas of physics for many years. In this respect, concerning the quantum world, one of the most significant phenomenon is due to the interaction of quantum vacuum fluctuations of the electromagnetic field with two conducting ideal plane parallel plates: the Casimir effect \cite{casimir}, meassured by Sparnaay in 1959~\cite{sparnaay58}. Frontiers are also essential in the theory of quantum black holes, where one of the most remarkable results is the brick wall model developed by G. 't Hooft \cite{thooft1,thooft2}, in which boundary conditions are used to implement the interaction of quantum massless particles with the black hole horizon observed from far away. In addition, the propagation of plasmons over the graphene sheet and the surprising scattering properties through abrupt defects \cite{lmm} can be understood by using boundary conditions to represent the defects. In all these situations, the physical properties of the frontiers and their interaction with quantum objects of the bulk are mimicked by different boundary conditions. Many of these effects concerning condensed matter quantum field theory can be reproduced in the laboratory.

Moreover, point potentials or potentials supported on a point have attracted much attention over the years (see \cite{belloni14rev} for a review). These kind of potentials, also called contact interactions, enables us to build integrable toy-model approximations for very localised interactions. The most known example of such kind of interaction is the  Dirac-$\delta$ potential, that has been extensively studied in the literature (see, e.g. \cite{kronig-prsa31,galindo1990quantum,jackiw-deltas}). Since this potential admits only one bound state when it has negative coupling \cite{galindo1990quantum,jackiw-deltas}, it can represent Hydrogen-like nuclei in interaction with a classical background. Dirac-$\delta$ potentials can also be used to represent extended plates in effective scalar quantum field theories to compute the quantum vacuum interaction for semi-transparent plates in flat spacetime or  curved backgrounds \cite{munoz2015delta,JMC,JMG,bordag-jpa25}.

From what has been mentioned above, it is intiutively clear that quantum boundaries and contact interactions are almost the same. The rigorous mathematical framework to study them is the use of selfadjoint extensions to represent extended objects and point supported potentials (see \cite{AIM2,AIM1,AGLM} for a more physical point of view), such as plates in the Casimir effect setup \cite{aso-mc1,aso-mc2}, or contact interactions more general than the Dirac-$\delta$ in quantum mechanics and effective quantum field theories \cite{kurasov1996distribution,dittrich92,gitman2012self,albeverio-book,ibort1,fucci1,fucci2}. The theory of selfadjoint extensions for symmetric operators has been well known to mathematicians for many years. However, it only became a valuable tool for modern quantum physics after the seminal works of Asorey {\it et al} \cite{AIM1,aso-mc1,aso-mc2}, in which the problem was re-formulated in terms of physically meaningful quantities for relevant operators in quantum mechanics and quantum field theory \cite{trg1,trg2,fucci3,fucci4,fassari18,facchi18}.

One of the most immediate extensions of the Dirac-$\delta$ potential is the $\delta^\prime$-potential
$
V_{\delta^\prime}=b\delta^\prime (x).
$
Over the last years, there has been some controversy about the definition of this potential 
in one dimension (see the discussion in \cite{gadella2009bound,gadella-jpa16}), and yet it is not clear how $V_{\delta^\prime}$ should be characterised. The aim of this paper is not to discuss this definition but to use the one introduced in \cite{gadella2009bound}, including a Dirac-$\delta$ to regularise the potential, and study its generalisation as a hyperspherical potential in dimension $d>1$. We will fully solve the non-relativistic quantum mechanical problem associated with the spherically symmetric potential
\begin{equation}\label{vddp}
\widehat V_{\delta\text{-}\delta^\prime}(r) =  a \,\delta  (r-r_0)+b\,\delta ' (r-r_0),\quad a,b\in \mathbb{R},\,\,r_0>0.
\end{equation}
Due to the radial symmetry of the problem, we will end up having a family of one-dimensional Hamiltonians (the radial Hamiltonian), for which a generalisation of the definition given in \cite{gadella2009bound,kurasov1996distribution} is needed.

This paper is organised as follows. Section \ref{sec:2} defines the spherically symmetric $\delta$-$\delta'$  potential in arbitrary dimension based on the work for one dimensional systems performed in \cite{kurasov1996distribution}. Having determined the  properties which characterise the potential, we carry out a thorough study of the bound states structure in Section \ref{sec:BS} and of the zero-modes and scattering states in Section~\ref{zmode-sec}. In the latter, we also compute some numerical results concerning the mean value of the position (radius) operator. Through these two sections we specially focus on the peculiarities of the two dimensional case. Finally, in Section \ref{sec:conclu}  we present our concluding remarks. 

\section{The $\delta\text{-}\delta^\prime$ interaction in the $d$-dimensional Schr\"odinger equation}\label{sec:2}

We consider a non-relativistic  quantum particle of mass $m$ moving  in $\mathbb{R}^d $ ($d=2,3,\dots$) under the influence of the spherically symmetric potential $  \widehat V_{\delta\text{-}\delta^\prime}(r)$ given in \eqref{vddp}. 
The quantum Hamiltonian operator that governs the dynamics of the system is
\begin{equation}
{{\bf H}}=\frac{-\hbar^2}{2\,m}\widehat\Delta_d+ \widehat V_{\delta\text{-}\delta^\prime}(r),
\end{equation}
where $\widehat\Delta_d$ is the $d$-dimensional Laplace operator.
To start with, let us analyse the dimensions of the free parameters $a$ and $b$ that appear in our system. Using the properties of the Dirac-$\delta$ under dilatations and knowing that the $\delta^\prime$ has to have the same units as the formal expression ${d\delta(x)}/{dx}$  it is straightforward to see that the dimensions of the parameters $a$ and $b$ are
\begin{equation}
[a]=L^3T^{-2}M,\quad [b]=L^4T^{-2}M.
\end{equation}
Hence, we can introduce the following dimensionless quantities:
\begin{equation}
{{\bf h}}\equiv\frac{2}{mc^2}{{\bf H}},\quad w_0\equiv\frac{2a}{\hbar c},\quad w_1\equiv\frac{b m}{\hbar^2},\quad 
 x\equiv\frac{mc}{\hbar}r .
\end{equation}
With the previous definitions, the dimensionless quantum Hamiltonian reads
\begin{equation}
{{\bf h}}=-\Delta_d+ w_0 \,\delta  (x-x_0)+2w_1\,\delta ' (x-x_0).\label{5}
\end{equation}
Introducing hyperspherical coordinates, $(x,\Omega_d\equiv \lbrace\theta_1,\dots,\theta_{d-2},\phi\rbrace)$, the $d$-dimensional Laplace operator $\Delta_d$ is written as
\begin{equation}\label{eq:LaplacianOperator}
\Delta_d=\frac{1}{x^{d-1}}\frac{\partial}{\partial x}\left(x^{d-1}\frac{\partial}{\partial x}\right)+\frac{\Delta_{S^{d-1}}}{x^2},
\end{equation}
where $\Delta_{S^{d-1}}=-{\bf L}_{d}^{2}$ is the Laplace-Beltrami operator in the hypersphere $S^{d-1}$, and minus the square of the generalised dimensionless angular momentum operator \cite{avery2010harmonic}. In hyperspherical coordinates, the eigenvalue equation for ${{\bf h}}$ in \eqref{5} is separable, and therefore we can write the solutions as
\begin{equation}\label{7}
\psi_{\lambda\ell}(x,\Omega_d)=R_{\lambda\ell}(x) Y_\ell(\Omega_d),
\end{equation}
where $R_{\lambda\ell}(x)$ is the radial wave function and $Y_\ell(\Omega_d)$ are the hyperspherical harmonics which are the eigenfunctions of $\Delta_{S^{d-1}}$ with eigenvalue (see \cite{kirsten-book} and references therein)
\begin{equation}
\chi(d,\ell)\equiv-\ell(\ell+d-2).\label{8}
\end{equation}
The degeneracy of  $\chi(\ell,d)$ is given by \cite{seto1974bargmann}
\begin{equation}\label{9}
\text{deg}(d,\ell) =\begin{cases}
\dfrac{  (d+\ell-3)!}{(d-2)! \ell !}(d+2 (\ell-1))&\text{if}\quad d\neq 2\ \ \text{and}\ \ \ell\neq 0,\\[1.5ex]
\hfil 1 &\text{if}\quad d= 2\ \ \text{and}\ \ \ell=0.
\end{cases}
\end{equation}
In three dimensions we come up with $\chi(3,\ell)=-\ell(\ell+1)$ and $\text{deg}(3,\ell) =2\ell +1$ as expected.
Taking into account the eigenvalue equation for \eqref{5} and equations (\ref{7},\,\ref{8}) the radial wave function fulfils
 \begin{equation}\label{eq11}   
\left[- \dfrac{d^2}{dx^2}\! -\!\dfrac{d-1}{x}\dfrac{d}{dx}
 + \dfrac{\ell(\ell+d-2)}{x^2}+V_{\delta\text{-}\delta^\prime}(x) \right]{R}_{\lambda\ell}(x)=\lambda R_{\lambda\ell},
\end{equation}  
being 
\begin{equation}\label{eq:PotentialDeltaDeltap}
V_{\delta\text{-}\delta^\prime}(x)=w_0\delta(x-x_0)+2w_1\delta^\prime(x-x_0).
\end{equation}

To solve the eigenvalue equation \eqref{eq11}, we first need to define the potential $V_{\delta\text{-}\delta^\prime}$. In order to characterise the potential $V_{\delta\text{-}\delta^\prime}(x)$ as a selfadjoint extension following \cite{gadella2009bound,kurasov1996distribution}, 
we introduce the reduced radial function
\begin{equation}\label{eq:reducedfunction}
u_{\lambda\ell}(x) \equiv x^{\frac{d-1}{2}}{R}_{\lambda\ell}(x),
\end{equation}
to remove the first derivative from the one dimensional radial operators in  \eqref{eq11}. Taking into account \eqref{eq11} and \eqref{eq:reducedfunction}, we obtain the eigenvalue problem that this function satisfies
\begin{equation}\label{eq:ReducedRadialEq}
\left(\hvu+V_{\delta\text{-}\delta^\prime}(x)\right) u_{\lambda\ell}(x)=\lambda_\ell u_{\lambda\ell}(x),
\end{equation}
where
\begin{equation}\label{16}
\hvu\equiv -\dfrac{d^2}{dx^2}\! +
 \frac{(d+2 \ell-3) (d+2 \ell-1)}{4 x^2} .
\end{equation}
Thus, as in \cite{kurasov1996distribution}, we define the potential $V_{\delta\text{-}\delta^\prime}$ through a set of matching conditions  on the eigenfuntion of $\hvu$ at $x=x_0^\pm$. The discussion for the one dimensional case imposes that the wave function $\psi$ must belong to the Sobolev space $W^2_2(\mathbb{R}\setminus\{0\})$ in order to ensure that $\psi''(x) $ is a square integrable function, i.e., the mean value of the kinetic energy is finite. To generalise this condition to higher dimensional Hamiltonians with spherical symmetry, we need to impose that the domain of wave functions where the operator $\hvu$ is selfadjoint when it is defined on $\mathbb{R}_{>0}$ is 
\begin{equation}
W\left(\hvu,\mathbb{R}_{>0}\right)\equiv\{ f(x)\in L^2(\mathbb{R}_{>0})\vert\,\langle \hvu \rangle_{f(x)}<\infty\},
\end{equation}
where the expectation value of $\hvu$ is defined as usual  $$\langle \hvu \rangle_{f(x)}\equiv\int_0^\infty  f^*(x)\left( \hvu f(x)\right)\,dx.$$
When we remove  the point $x=x_0$ the operator $\hvu$ is no longer selfadjoint on the space of functions $W\left(\hvu,\mathbb{R}_{x_0}\right)\equiv\{ f(x)\in L^2(\mathbb{R}_{x_0})\vert\,\langle \hvu \rangle_{f(x)}<\infty\}$ since
\begin{eqnarray*}
\int_0^\infty\!\!\!\! dx\, \phi^*\left(\hvu\varphi\right)-\int_0^\infty\!\!\!\!  dx\, \varphi\left(\hvu\phi\right)^*\neq 0,\quad \varphi,\phi\in W\left(\hvu,\mathbb{R}_{x_0}\right),
\end{eqnarray*}
due to the boundary terms appearing when integrating by parts twice. Nevertheless, $\hvu$ is symmetric on the subspace  given by the closure of the $L^2(\mathbb{R}_{x_0})$ functions with compact support in $\mathbb{R}_{x_0}$. This situation generalises the initial conditions given in \cite{kurasov1996distribution}, and matches the geometric view in \cite{AIM1,AIM2}.
Hence, the domain of the selfadjoint extension $\hvu+V_{\delta\text{-}\delta^\prime}$ of the operator $\hvu$ defined on  $\mathbb{R}_{x_0}$ is given by
\begin{equation}\label{eq:MatchingU}
 {\cal D}\left( \hvu+V_{\delta\text{-}\delta^\prime}\right)=\left\{ f\in W\left(\hvu,\mathbb{R}_{x_0}\right)\vert\,\left(
\begin{array}{c}
 f(x_0^{+}) \\
 {f'(x_0^{+})} \\
\end{array}
\right)=\left(
\begin{array}{cc}
 \alpha  & 0 \\
 \beta  & {\alpha^{-1} } \\
\end{array}
\right)\left(
\begin{array}{c}
 f(x_0^{-}) \\
 f'(x_0^{-}) \\
\end{array}
\right)\right\}, 
\end{equation}
where we have introduced the values
\begin{equation}\label{eq:Notation}
\alpha\equiv\frac{1+w_1}{1-w_1},\quad \beta\equiv \frac{w_0}{1-w_1^2}.
\end{equation}
Now, using \eqref{eq:reducedfunction} in \eqref{eq:MatchingU} we obtain the following matching conditions for the radial wave function $R_{\lambda\ell}$:
\begin{equation}\label{eq:MatchingR}
\left(
\begin{array}{c}
 R_{\lambda\ell} (x_0^{+}) \\
 R'_{\lambda\ell}(x_0^{+}) \\
\end{array}
\right)= \left(
\begin{array}{cc}
 \alpha  & 0 \\
 {\tilde \beta}_{} & {\alpha^{-1} } \\
\end{array}
\right)\left(
\begin{array}{c}
R_{\lambda\ell} (x_0^{-}) \\
  R'_{\lambda\ell}(x_0^{-}) \\
\end{array}
\right), 
\end{equation}
where the effective couplings ${\tilde \beta}$ and ${\tilde w_0}$ are
\begin{eqnarray}\label{eq:NewCouplings}
{\tilde \beta} \equiv  \beta -\frac{\left(\alpha ^2-1\right) (d-1)}{2 \, \alpha\,  x_0} =  \frac{{\tilde w_0}}{1-w_{1}^2}\quad \Rightarrow \quad
{\tilde w_0}\equiv \dfrac{2 (1-d)\, {w_{1}}}{x_0}+{w_0}.
\end{eqnarray}
Observe that when we turn off the $\delta'$ contribution, $w_1=0$ or $\alpha=1$,  the  finite discontinuity in the derivative that characterises the $\delta$-potential arises
\begin{equation*}
R_{\lambda\ell} (x_0^{+})=R_{\lambda\ell} (x_0^{-}) \quad \text{and}\quad
 R'_{\lambda\ell}(x_0^{+}) - R'_{\lambda\ell}(x_0^{-}) = w_0 R_{\lambda\ell} (x_0).
\end{equation*}
On the other hand, when $w_1 = \pm 1$ the matching condition matrix  is ill defined because it does not relate the boundary data on $x_0^-$ with those on $x_0^+$. This case is treated in detail in \cite{munoz2015delta}, where it is demonstrated that $w_1=\pm1$ leads to Robin and Dirichlet boundary conditions in each side of the singularity $x=x_0$. Specifically, 
\begin{equation}\label{eq:RobinDirichlet}
\begin{aligned}
{R_{\lambda\ell}}(x_0^+)-\frac{4}{{\tilde w_0}^+}R_{\lambda\ell}^\prime(x_0^+)&=0, \qquad {R_{\lambda\ell}}(x_0^-)=0\quad\text{if}\quad w_1= 1,\\[0.5ex]
{R_{\lambda\ell}}(x_0^-)+\frac{4}{{\tilde w_0}^-}R_{\lambda\ell}^\prime(x_0^-)&=0, \qquad {R_{\lambda\ell}}(x_0^+)=0\quad\text{if}\quad w_1= -1,
\end{aligned}
\end{equation}
where ${\tilde w_0}^{\pm}= w_0\pm{2 (1-d)}/{x_0}$. Recently the potential \eqref{eq:PotentialDeltaDeltap} was studied for two and three dimensions in \cite{govindarajan2016modelling} where the matching conditions used for $R_{\lambda\ell}$ are those in \eqref{eq:MatchingU} instead of \eqref{eq:MatchingR}, an approximation which is only satisfied if
\begin{equation}\label{eq:X}
 x_0\,\vert w_0\vert \gg\vert w_1\vert
\end{equation}
Throughout the text, we will point out the equations that are valid even when the previous inequality does not hold.

\subsubsection*{A remark on selfadjoint extensions and point supported potentials}  The operator $\hvu$ defined as a one dimensional Hamiltonian over the physical space $\mathbb{R}_{x_0}$ is not selfadjoint as we have seen. In order to define a true Hamiltonian as a selfadjoint operator one has to select a selfadjoint extension of $\hvu$ in the way explained above for the particular case of the potential $V_{\delta\text{-}\delta^\prime}(x)$. More generally, the set of all selfadjoint extensions is in one-to-one correspondence with the set of unitary matrices $U(2)$. As was demonstrated in \cite{AIM1}, for a given unitary matrix $G\in U(2)$ there is a unique selfajoint extension $\hvu^{G}$ of $\hvu$. In this sense, the selfadjoint extension $\hvu^{G}$ can be thought in a more physically meaningful way as a potential $V_G(x-x_0)$ supported on a point $x_0$ for the quantum Hamiltonian $\hvu$ and write $\hvu+V_G(x-x_0)\equiv\hvu^{G}$. Physically one would just think on $V_G(x-x_0)$ as a potential term in the same way as the Dirac-$\delta$ potential \cite{galindo1990quantum}. In this view, once the operator $\hvu$ is fixed, the selfadjoint extensions can be seen as potentials supported on a point, and the other way around because of the one-to-one correspondence demonstrated in \cite{AIM1} (and recently reviewed in \cite{AIM2}).

\section{Bound states with the free Hamiltonian and the singular interaction}\label{sec:BS}

In this section we will analyse in detail the discrete spectrum of negative energy states (bound states) for the $\delta\text{-}\delta^\prime$ potential. In particular, we will give an analytic formula for the number of them as a function of the parameters $\{w_0,w_1,x_0\}$. As the eigenvalue equation for the bound states is \eqref{eq11} with $\lambda <0$, we define $\lambda\equiv-\kappa^2$ with $\kappa>0$, and replace the subindex $\lambda$ by $\kappa$ in the wave functions all over this section.
The general form of the solutions of  equation \eqref{eq11}  is
\begin{eqnarray}\label{eq:RadialB}
{R}_{\kappa\ell}(x)   = \begin{cases}A_1\, \mathcal{I}_{\ell}(\kappa x) +B_1\,\mathcal{K}_{\ell}(\kappa x) \quad \text{if}\quad x\in (0, x_0),\\[0.5ex] 
A_2\,\mathcal{I}_{\ell}(\kappa x) +B_2\,\mathcal{K}_{\ell}(\kappa x) \quad \text{if}\quad x\in (x_0, \infty), 
\end{cases}
\end{eqnarray}
being $\mathcal{I}_{\ell}(z)$ and $\mathcal{K}_{\ell}(z)$, up to a constant factor, the  modified hyperspherical Bessel functions of the first and second kind respectively
\begin{equation}
\mathcal{I}_{\ell}(z)\equiv \dfrac{1}{z^{\nu}} \, I_{\ell+\nu}( x), \quad\mathcal{K}_{\ell}(z)\equiv \dfrac{1}{z^{\nu}} \, K_{\ell+\nu}(z)\quad\text{with}\quad \nu\equiv\frac{d-2}{2},
\end{equation}
Similarly from Eq.\eqref{eq:reducedfunction} the general form of the reduced radial function is
\begin{equation}\label{eq:ReducedRadialB}
{u}_{\kappa\ell}(x)   =\sqrt{x} \begin{cases}A_1\, I_{\ell+\nu}(\kappa x) +B_1\,K_{\ell+\nu}(\kappa x)  \quad  \text{if}\quad x\in (0, x_0),\\[0.5ex] 
A_2\, I_{\ell+\nu}(\kappa x) +B_2\,K_{\ell+\nu}(\kappa x)  \quad  \text{if}\quad x\in (x_0, \infty).
\end{cases}
\end{equation}
The integrability condition on the reduced radial function
\begin{equation*}
\int_0^\infty |{u}_{\kappa\ell}(x)|^2\,dx <\infty
\end{equation*}
imposes $A_2 = 0$. Moreover, the solution multiplied by $B_1$ is not square integrable  except for zero angular momentum in two and three dmensions  \cite{olver2010nist}. The regularity condition of the wave function at the origin
$
u_{\kappa \ell}(x=0)=0,
$ 
sets $B_1=0$ for $d=3$. It would seem that the two solutions in the inner region are admissible when $d=2$, but  $B_1\neq 0$ would lead to a normalizable  bound state with arbitrary negative
energy \cite{berry1973semiclassical}. In addition, for any  wave function $ \psi $, the following identity involving the mean value of the kinetic energy operator:
\begin{equation}\label{eq:KineticOperator}
\dfrac{1}{2\,m}\langle\psi\vert P^2\vert\psi\rangle = \dfrac{1}{2\,m}(\langle\psi\vert \boldsymbol{P})\cdot(\boldsymbol{P}\vert\psi\rangle),
\end{equation}
holds if we impose certain conditions on the wave function at the boundary $x=0$, which are not satisfied by $\mathcal{K}_0$. Hence, we conclude that $B_1$ should be zero for all the cases.
With the previous analysis and (\ref{eq:MatchingR})   we obtain the matching condition 
\begin{equation}\label{eq:RecurrenceRelation}
B_2 \left(
\begin{array}{c}
 \mathcal{K}_{\ell}(\kappa x_0) \\[0.5ex]
 \left.\kappa\, \mathcal{K}'_{\ell}(\kappa x_0)\right.\\
\end{array}
\!\right) = A_1 \left(
\begin{array}{cc}
 \alpha  & 0 \\[0.5ex]
 {\tilde \beta}_{} & {\alpha^{-1} } \\
\end{array}
\right)  \left(
\begin{array}{c}
 \mathcal{I}_{\ell}(\kappa x_0) \\[0.5ex] 
 \left.\kappa\, \mathcal{I}'_{\ell}(\kappa x_0)\right.\\
\end{array}
\!\right)\!, 
\end{equation} 
from which  the secular equation is obtained
\begin{equation}\label{eq:SECULAR}
\alpha\left.\dfrac{d}{dx}\log \mathcal{K}_{\ell}(\kappa x) \right\vert_{x=x_0}= {{\tilde \beta} + \alpha^{-1}\, \left.\dfrac{d}{dx}\log \mathcal{I}_{\ell}(\kappa x)\right\vert_{x=x_0}}{ }.
\end{equation}
The solutions for $\kappa>0$ of the previous equation give the energies of the bound states accounting for $\lambda=-\kappa^2$. The equation \eqref{eq:SECULAR} can be written as
\begin{equation}\label{eq30}
F(y_0)\equiv -y_0 \left(\frac{I_{\nu+\ell-1}(y_0)}{\alpha \, I_{\nu+\ell}(y_0)}+\frac{\alpha\,  K_{\nu+\ell-1}(y_0)}{K_{\nu+\ell}(y_0)}\right)-\left(\alpha -\alpha^{-1}\right)\ell= {2\nu\left(\alpha -\alpha^{-1}\right) }+{\tilde \beta}  x_0,
\end{equation}
where $y_0\equiv\kappa x_0$ and the right hand side is independent of the  energy and the angular momentum. For  $d=2,3$ the  results of \cite{govindarajan2016modelling} are obtained as a limiting case ($x_0\vert w_0\vert\gg \vert w_1\vert$). In particular, the secular equation for the  $\delta$-potential ($\alpha = 1$ and ${\tilde \beta}=w_0$) is
\begin{equation*}
-y_0 \left(\frac{I_{\nu+\ell-1}(y_0)}{  I_{\nu+\ell}(y_0)}+\frac{K_{\nu+\ell-1}(y_0)}{K_{\nu+\ell}(y_0)}\right)= w_0\,  x_0.
\end{equation*}

\subsection{On the number of bound states}\label{sec:NumberBS}
Although equation \eqref{eq30} can not be solved analytically in $\kappa$, it can be used to characterise some fundamental aspects of the set of positive solutions of \eqref{eq30}. The main feature is the number of bound states that exist for $d$ and $\ell$
\begin{equation*}
N^d_\ell = n^d_\ell\, \text{deg}(d, \ell),
\end{equation*}
where $n^d_\ell$ is the the number of negative energy eigenvalues and  $\text{deg}(d, \ell)$ is the degeneracy associated with $\ell$ in $d$ dimensions \eqref{9}. In this way, we first delimit the possible values of $n^d_\ell$.
\begin{prop}\label{res:EnergyAngularMomenta}
In the $d$-dimensional quantum system described by the Hamiltonian \eqref{5}
the number $n^d_\ell$  is at most one, i.e., 
$
n^d_\ell \in \{ 0,1\}.
$

\begin{proof}{\text{}}
From \eqref{eq30} and applying the properties of the Bessel functions, the derivative of $F(\kappa x_0)$ with respect to $\kappa$ is
\begin{equation*}
x_0F'(y_0) = {y_0} \left[\alpha  \left(\frac{K_{\nu+\ell-1}(y_0) K_{\nu+\ell+1}(y_0)}{K_{\nu+\ell}(y_0)^2}-1\right)+\alpha^{-1}\left(1-\frac{I_{\nu+\ell-1}(y_0) I_{\nu+\ell+1}(y_0)}{I_{\nu+\ell}(y_0)^2}\right)\right],
\end{equation*}
and, as it is proven in \cite{laforgia2010some} and the references cited therein,
\begin{eqnarray*}
{K_{n-1}(y_0) K_{n+1}(y_0)}&>&{K_n(y_0)^2},\quad \text{if}\quad  y_0>0,\ n \geq -1/2,\\[0.5ex]
{I_{n-1}(y_0) I_{n+1}(y_0)}&<&{I_n(y_0)^2},\quad\ \,\text{if}\quad y_0>0,\ n\in \mathbb{R}.
\end{eqnarray*}
In the present case $n=\nu+\ell \geq 0$, therefore we can conclude that
\begin{equation}\label{eq:SignL1}
\text{sgn} \left(F'(y_0)\right) = \text{sgn} \left( \alpha\right).
\end{equation}
Hence, except for $\alpha = 0$ (ill defined matching conditions) $F(y_0)$ is a strictly monotone function and the proposition is proved.
\end{proof}
\end{prop}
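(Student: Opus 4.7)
The plan is to show that the left-hand side $F(y_0)$ of the secular equation \eqref{eq30} is a strictly monotone function of $y_0=\kappa x_0$ on $(0,\infty)$, while the right-hand side is a constant independent of $\kappa$ (and of $\ell$). Since a strictly monotone function takes each value at most once, this immediately forces the number of admissible positive roots $\kappa$ to be $0$ or $1$, giving $n^d_\ell\in\{0,1\}$.

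To implement this, I would first differentiate \eqref{eq30} with respect to $y_0$. Using the standard recurrence and derivative identities for the modified Bessel functions,
\begin{equation*}
I'_n(z)=I_{n-1}(z)-\tfrac{n}{z}I_n(z),\qquad K'_n(z)=-K_{n-1}(z)-\tfrac{n}{z}K_n(z),
\end{equation*}
the derivative of each logarithmic-derivative ratio $I_{\nu+\ell-1}/I_{\nu+\ell}$ and $K_{\nu+\ell-1}/K_{\nu+\ell}$ reduces, after a short computation, to an expression involving the products $I_{\nu+\ell-1}I_{\nu+\ell+1}$ and $K_{\nu+\ell-1}K_{\nu+\ell+1}$ divided by squares of the middle Bessel. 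The $\ell$-dependent and constant terms in \eqref{eq30} drop out of the derivative, leaving precisely the displayed expression for $x_0 F'(y_0)$ quoted in the statement.

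The key analytic input is then the Turán-type inequalities for modified Bessel functions recorded in \cite{laforgia2010some}: for $y_0>0$ one has $K_{n-1}(y_0)K_{n+1}(y_0)>K_n(y_0)^2$ whenever $n\geq -1/2$, and $I_{n-1}(y_0)I_{n+1}(y_0)<I_n(y_0)^2$ for all real $n$. Since in our setting $n=\nu+\ell=(d-2)/2+\ell\geq 0$ for every $d\geq 2$ and $\ell\geq 0$, both inequalities apply. They make the bracket multiplying $\alpha$ strictly positive and the bracket multiplying $\alpha^{-1}$ also strictly positive, so $x_0 F'(y_0)$ has the sign of $\alpha+\alpha^{-1}\cdot(\text{positive})$, and more cleanly the combined sign equals $\operatorname{sgn}(\alpha)$ as claimed in \eqref{eq:SignL1}.

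Thus, away from the excluded value $\alpha=0$ (which is precisely the singular case $w_1=-1$ treated separately in \eqref{eq:RobinDirichlet}), $F$ is strictly monotone on $(0,\infty)$, so the transcendental equation \eqref{eq30} has at most one positive root in $y_0$, which yields at most one bound-state wavenumber $\kappa$ for each pair $(d,\ell)$. The only delicate step is verifying the sign of the derivative cleanly; once the Turán inequalities are invoked the conclusion is immediate, so this is the part that carries the argument and the only place where some care with indices $\nu+\ell\geq 0$ is required.
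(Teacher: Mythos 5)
Your proposal is correct and follows essentially the same route as the paper: differentiate $F$ in $y_0$, invoke the Tur\'an-type inequalities of \cite{laforgia2010some} for $n=\nu+\ell\geq 0$ to see that both bracketed factors are positive, deduce $\operatorname{sgn}F'=\operatorname{sgn}\alpha$ (since $\alpha$ and $\alpha^{-1}$ share a sign), and conclude strict monotonicity and hence at most one root. The only cosmetic difference is that you explicitly identify the excluded case $\alpha=0$ with $w_1=-1$ and the boundary conditions \eqref{eq:RobinDirichlet}, which the paper leaves implicit.
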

\noindent This result is in agreement with the Bargmann's inequalities for a general potential in three dimensional systems
\begin{equation*}
n^{d=3}_\ell< \dfrac{1}{2 \,\ell +1}\int_0^\infty x |V(x)|\,dx,
\end{equation*}
which guarantees a finite number of bound states when the integral is convergent \cite{bargmann1952number}. Moreover, this inequality was generalised for arbitrary dimensional systems with spherical symmetry \cite{seto1974bargmann}
\begin{eqnarray}
n^{d}_\ell< \dfrac{1}{2 \,\ell +d-2}\int_0^\infty x |V(x)|\, dx \quad \text{if} \quad \int_0^\infty x |V(x)|\, dx<\infty \quad \text{and} \quad d + 2 \ell -2\geq 1.
\end{eqnarray}
In the case $d=2$ and $\ell = 0$, a stronger condition is imposed on the potential being the upper bound of the inequality  different \cite{seto1974bargmann}.
The previous inequality is optimal in the sense that there exits potentials $V_\varepsilon$ such that
\begin{equation*}
n^{d}_\ell+\varepsilon = \dfrac{1}{2 \,\ell +d-2}\int_0^\infty x |V_\varepsilon(x)|\, dx, 
\end{equation*}
being $\varepsilon$ an arbitrary small positive real number. In fact, when the potential is a linear combination of Dirac-$\delta$ potentials sufficiently distant from each other the last equality holds with $\varepsilon\to 0$ \cite{schwinger1961bound}.
The following result also matches with the properties of such potentials \cite{galindo1990quantum}.
\begin{prop}\label{res:lmax}
The $d$-dimensional quantum system described by the Hamiltonian \eqref{5} admits bound states with angular momentum $\ell$ if, and only if,
\begin{equation}
\ell_{max}\neq L_{max},\quad\text{and}\quad \ell \in \{0,1,\dots,\ell_{max} \}\quad (\ell_{max}>-1),
\end{equation}
where 
\begin{equation}\label{eq:lmax}
\ell_{max}\equiv \left\lfloor L_{max}\right\rfloor,\quad L_{max}\equiv \frac{ w_1- {x_0\, w_0}/2}{w_1^2+1}+\dfrac{2 - d}{2},
\end{equation}
being $\lfloor \cdot\rfloor$ the integer part.
In addition, if  $\lambda_\ell = -\kappa^2_\ell$ is the energy of the bound state with angular momentum $\ell$ the following inequality holds
\begin{equation*}
\lambda_\ell <\lambda_{\ell+1}< 0,\quad \ell \in \{0,1,\dots,\ell_{max} -1\}.
\end{equation*}
\end{prop}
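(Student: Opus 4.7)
By Proposition~\ref{res:EnergyAngularMomenta} the function $F$ in the secular equation \eqref{eq30} is strictly monotone on $(0,\infty)$, so a bound state with angular momentum $\ell$ exists iff the right-hand side of \eqref{eq30} lies in the open image of $F$. My plan is to compute the two endpoint limits of $F$, simplify the right-hand side so its $\ell$-dependence becomes transparent, and collapse the resulting existence inequality into the announced form $\ell<L_{max}$; the ordering $\lambda_\ell<\lambda_{\ell+1}$ will then follow by a separate min-max argument.

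\emph{Endpoint limits of $F$.} Using the small-argument expansions $I_\mu(z)\sim (z/2)^\mu/\Gamma(\mu+1)$ (valid for $\mu$ not a negative integer, which covers $\mu=\nu+\ell-1$ for every $d\geq 2$, $\ell\geq 0$) together with $K_\mu(z)\sim\tfrac12\Gamma(\mu)(2/z)^\mu$ for $\mu>0$, and the logarithmic behaviour $K_0(z)\sim-\ln z$, $K_1(z)\sim 1/z$ in the delicate borderline case $d=2$, $\ell=0$ (where $K_{-1}=K_1$), one obtains
\[
\lim_{y_0\to 0^+}\frac{y_0\,I_{\nu+\ell-1}(y_0)}{I_{\nu+\ell}(y_0)}=2(\nu+\ell),\qquad
\lim_{y_0\to 0^+}\frac{y_0\,K_{\nu+\ell-1}(y_0)}{K_{\nu+\ell}(y_0)}=0,
\]
hence $F(0^+)=-2(\nu+\ell)/\alpha-(\alpha-\alpha^{-1})\ell$. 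The Hankel-type asymptotics of $I_\mu$ and $K_\mu$ give $F(y_0)\sim-y_0(\alpha+\alpha^{-1})\to\mp\infty$ as $y_0\to\infty$, with sign opposite to that of $\alpha$. Combined with monotonicity, the image of $F$ is the open interval joining $F(0^+)$ to the corresponding infinity.

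\emph{Collapsing the inequality.} Substituting $\tilde\beta$ from \eqref{eq:NewCouplings} into the right-hand side of \eqref{eq30} and using $(\alpha^2-1)/\alpha=\alpha-\alpha^{-1}$ together with $2\nu=d-2$, the $\ell$-dependence cancels and leaves the constant $C\equiv\tfrac{d-3}{2}(\alpha-\alpha^{-1})+\beta x_0$. The requirement that $C$ lie strictly on the correct side of $F(0^+)$ becomes, after grouping $2\ell/\alpha+(\alpha-\alpha^{-1})\ell=\ell(\alpha+\alpha^{-1})$ and dividing by $\alpha+\alpha^{-1}$ (which flips the inequality precisely when $\alpha<0$, producing the same final direction in both cases), the single linear inequality
\[
\ell<\frac{-\tfrac{d-3}{2}(\alpha^2-1)-(d-2)-\alpha\beta x_0}{\alpha^2+1}.
\]
Converting back to $w_0,w_1$ via $\alpha=(1+w_1)/(1-w_1)$ and $\beta=w_0/(1-w_1^2)$, the identities $\alpha^2+1=2(1+w_1^2)/(1-w_1)^2$, $\alpha^2-1=4w_1/(1-w_1)^2$, and $\alpha\beta=w_0/(1-w_1)^2$ make the $(1-w_1)^2$ denominators cancel, and routine simplification collapses the bound to $\ell<L_{max}$. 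This is equivalent to $\ell\in\{0,1,\dots,\ell_{max}\}$ precisely when $L_{max}\notin\mathbb{Z}$, i.e., $\ell_{max}\neq L_{max}$ (the excluded edge case, where $C=F(0^+)$ gives a threshold zero-mode rather than a true bound state), while $\ell_{max}>-1$ just ensures the set is non-empty.

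\emph{Ordering and main obstacle.} The radial operator on the left of \eqref{eq11} depends on $\ell$ only through the centrifugal term $\ell(\ell+d-2)/x^2$, which is strictly increasing in $\ell\geq 0$ since $(\ell+1)(\ell+d-1)-\ell(\ell+d-2)=2\ell+d-1>0$, while the matching conditions \eqref{eq:MatchingR} are $\ell$-independent. The min-max principle applied to the common selfadjoint extension therefore forces $\lambda_{\ell+1}>\lambda_\ell$, and $\lambda_{\ell+1}<0$ holds by assumption whenever $\ell+1\leq\ell_{max}$. The main obstacle I anticipate is the careful endpoint analysis—especially verifying that $y_0 K_{\nu+\ell-1}/K_{\nu+\ell}\to 0$ in the two-dimensional $\ell=0$ case where $K_0$ diverges logarithmically—and the somewhat tedious but mechanical algebra needed to cast the existence inequality into the elegant $L_{max}$ form.
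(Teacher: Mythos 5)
Your proposal is correct and, for the main claim, follows the same route as the paper: strict monotonicity of $F$ from Proposition~\ref{res:EnergyAngularMomenta}, evaluation of the $\kappa\to 0^+$ threshold of the secular equation \eqref{eq30}, and algebraic reduction of the resulting inequality to $\ell<L_{max}$; your endpoint value $F(0^+)=-2(\nu+\ell)/\alpha-(\alpha-\alpha^{-1})\ell$ and the final formula \eqref{eq:lmax} both check out. You differ from the paper in two useful ways. First, you make the $y_0\to\infty$ asymptotics $F\sim -y_0(\alpha+\alpha^{-1})$ explicit, so the image of $F$ is pinned down as an open interval and the existence criterion becomes airtight (the paper leaves this implicit, and in fact its displayed $F_0(\ell)$ and $F'$ carry an overall sign relative to the literal definition \eqref{eq30}, which your bookkeeping avoids). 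Second, for the ordering $\lambda_\ell<\lambda_{\ell+1}$ the paper argues from the monotonicity of the threshold function in $\ell$ ("it is clear that $\kappa_\ell>\kappa_{\ell+1}$"), whereas you invoke the min--max principle using the strict increase of the centrifugal term $\ell(\ell+d-2)/x^2$ and the $\ell$-independence of the matching conditions; this is cleaner and more robust, though you should note that min--max gives $\lambda_{\ell+1}\geq\lambda_\ell$ a priori and strictness requires the standard remark that the ground state does not vanish on a set of positive measure. One small slip in your justification: for $d=2$, $\ell=0$ the order $\mu=\nu+\ell-1=-1$ \emph{is} a negative integer, so the expansion $I_\mu(z)\sim (z/2)^\mu/\Gamma(\mu+1)$ does not apply as stated; however $I_{-1}=I_1\sim z/2$ still gives $y_0 I_{-1}(y_0)/I_0(y_0)\to 0=2(\nu+\ell)$, so the conclusion is unaffected.
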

\begin{proof}
We analyse the behaviour of $F(y_0)$ near the origin of energies.
The  solutions of (\ref{eq:RadialB}) satisfy 
\begin{eqnarray*}
\lim_{\kappa\to 0^+} \dfrac{d}{d\kappa}\log \mathcal{I}_\ell(\kappa x_0)=\frac{\ell}{x_0},\quad
\lim_{\kappa\to 0^+} \dfrac{d}{d\kappa}\log \mathcal{K}_\ell(\kappa\, x_0)=-\frac{d+\ell-2}{x_0},
\end{eqnarray*}
therefore the secular equation (\ref{eq30}) for $\kappa\to 0^+$  becomes
\begin{equation}\label{eq:Eqlmax}
F_{0}(\ell)\equiv\lim_{\kappa\to 0^+}F(y_0)=\alpha^{-1}({d+\ell-2})+\alpha  \ell=\left(\alpha -{\alpha^{-1} }\right) (d-2)+{\tilde \beta}  x_0.
\end{equation}
The function $F_{0}(\ell)$ is a strictly monotone function, increasing if $\alpha>0$ and decreasing if $\alpha<0$. In addition, from (\ref{eq:SignL1}) we can conclude that there are no bound states for $\ell>\ell_{max}$.
Using the definitions of  \eqref{eq:Notation}, the solution of (\ref{eq:Eqlmax}) is $L_{max}$ given by \eqref{eq:lmax}. Finally, with the previous analysis it is clear that $\kappa_\ell >\kappa_{\ell+1}$. 
\end{proof}
From \eqref{eq:lmax}, it can be seen that as the dimension of the system increases, the maximum angular momentum reached by the system decreases. This happens because the centrifugal potential in (\ref{16}) becomes more repulsive as $d$ grows.
In Fig.\ref{fig:BoundStates} we plot two configurations in two dimensions which illustrate the results of Propositions \ref{res:EnergyAngularMomenta} and \ref{res:lmax}. 
\begin{figure}[h!]
    \centering
    \begin{subfigure}[b]{0.50\textwidth}
        \includegraphics[width=1\textwidth]{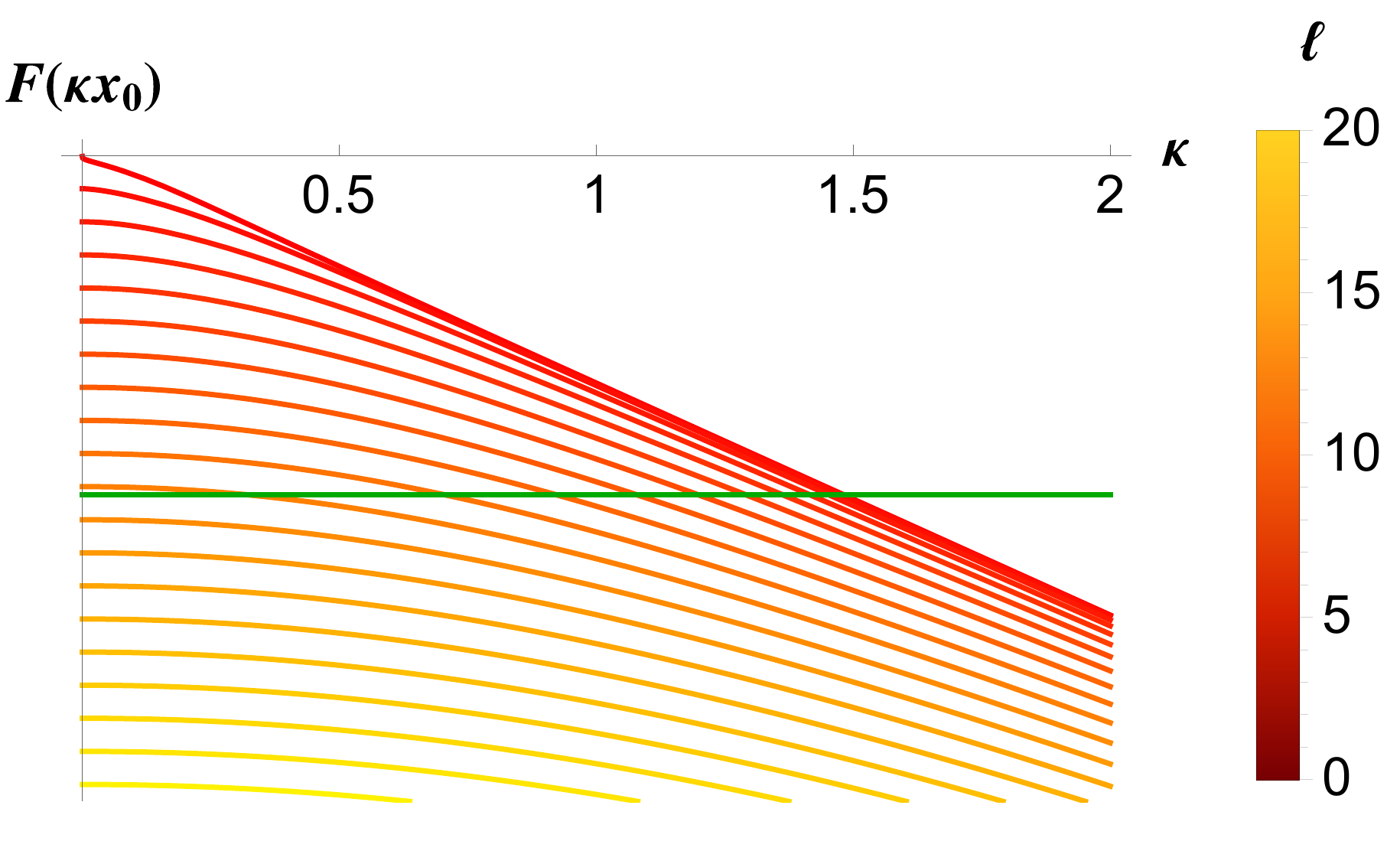}
    \label{fig:1BoundState1}
    \end{subfigure}
    \begin{subfigure}[b]{0.46\textwidth}
        \includegraphics[width=1\textwidth]{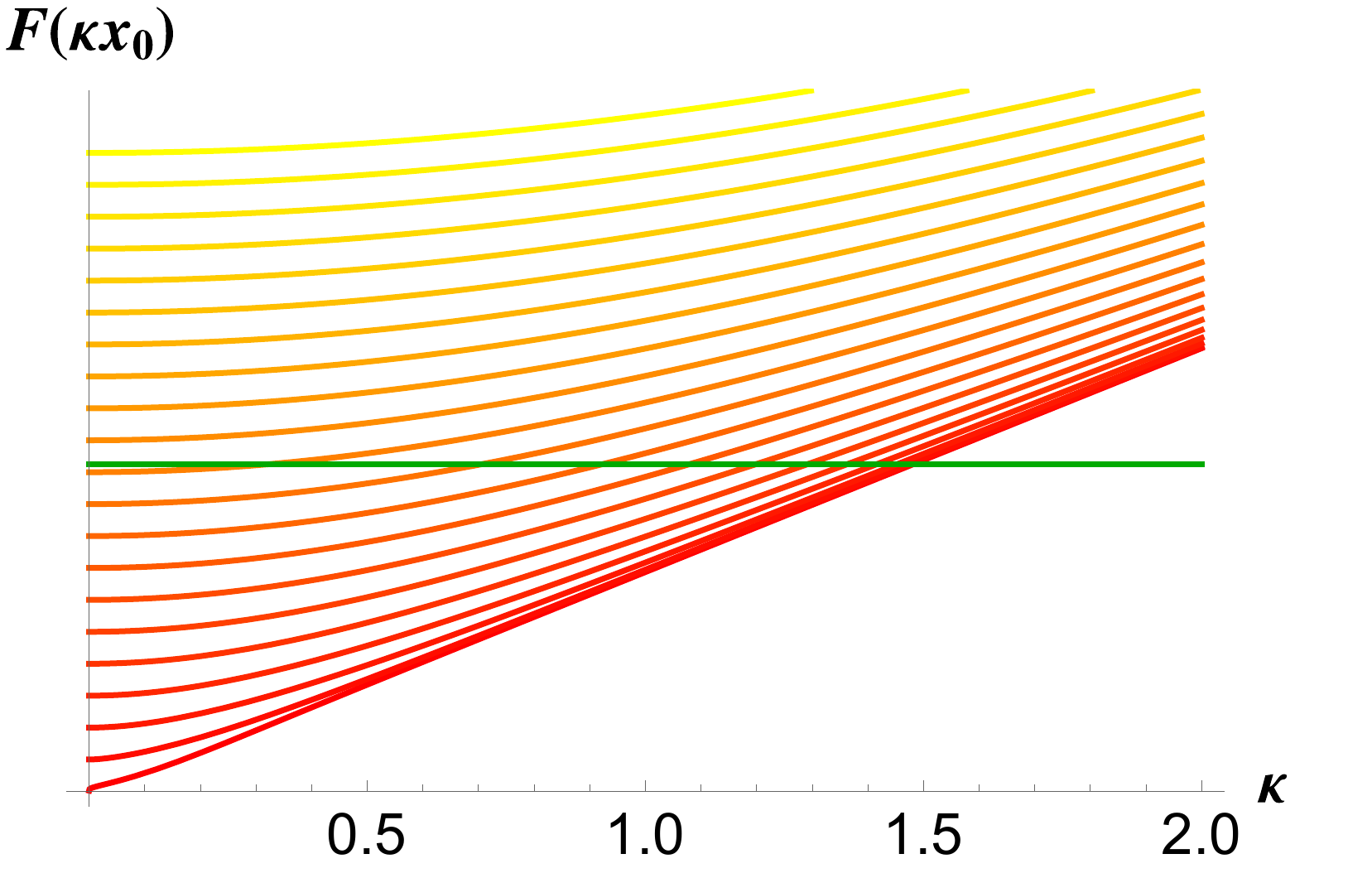}
    \label{fig:1BoundState2}
    \end{subfigure}
    \caption{Each curve represents $F(\kappa\,x_0)$ in \eqref{eq30} for different values of the angular momentum. The green horizontal line is the r.h.s. of \eqref{eq30}. LEFT: $d=2$, $\alpha =0.8$, ${\tilde \beta}=-3$ and $x_0 = 7$. RIGHT: $d=2$, $\alpha =-0.8$, ${\tilde \beta}=3$ and $x_0 = 7$.}
    \label{fig:BoundStates}
\end{figure}

The results obtained in \cite{govindarajan2016modelling} for $d=2$ and $d=3$ are recovered when $\vert w_0\vert x_0\gg \vert w_1\vert$ (for $d = 3$  there is minus sign and  the integer part  missing). To end this section, let us briefly study the behaviour of the number of negative energy eigenvalues as a function of the dimension $d$ and the angular momentum $\ell$. As was shown above, the number of bound states depends on deg$(d,\ell)$  \eqref{9}.  The increments with respect to $d$ and  $\ell$ are
\begin{eqnarray*}
\text{deg}(d+1,\ell)-\text{deg}(d,\ell)&=&\frac{ (d+\ell-3)!(d+2\ell-3)}{(\ell-1)! (d-1)!},\\[0.5ex]
\text{deg}(d,\ell+1)-\text{deg}(d,\ell)&=&\frac{ (d+\ell-3)!(d+2\ell-1)}{(\ell+1)! (d-3)!},
\end{eqnarray*}
therefore, both quantities are positive if $d\geq 3$ and $\ell\geq 1$. This ensures the growth of the number of bound states with the dimension and the angular momentum, except for $\ell=0$ where the degeneracy is always 1 (ground state) and for $d=2$ where $\text{deg}(2,\ell)=2$ for $\ell\geq1$. 
\subsection{Special feature of two dimensions}\label{sp-feat}
It is  known  that  the existence of bound states with $V_\delta=w_0\delta(x-x_0)$ necessarily imposes $w_0<0$ for any dimension $d$. This fact can be easily proved with the results obtained above. The maximum angular momentum for this potential is
\begin{equation}
\ell_{max}\equiv\left\lfloor \frac{- x_0\, w_0}{2}+\dfrac{2 - d}{2}\right\rfloor \leq L_{max} = \frac{ -{x_0\, w_0}}{2}+\dfrac{2 - d}{2} <  \dfrac{2 - d}{2}  \leq 0 \quad\text{if}\quad w_0>0,
\end{equation}
which means that there are no bound states if $w_0>0$. The next proposition shows that this condition on the coupling $w_0$ does not remain valid for all the cases when we add the $\delta^\prime$-potential, allowing the existence of a bound state with arbitrary positive $w_0$ for $d=2$ with $\ell=0$. This result is quite surprising taking into account  the usual interpretation of the Dirac-$\delta$ potential as a infinitely thin potential barrier if $w_0>0$. The key point to understand it is that only for $d=2$ and $\ell=0$ the centrifugal potential in \eqref{16} is attractive (centripetal), since $d+2\ell-3=-1<0$.
\begin{prop}\label{res:newBoundState} 
The quantum Hamiltonian \eqref{5} admits a bound state for any $w_0>0$  only if $d=2$ and $\ell=0$.
\end{prop}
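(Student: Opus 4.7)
The strategy is to reduce the proposition to a one-variable optimization via Proposition~\ref{res:lmax}. A bound state at angular momentum $\ell$ in dimension $d$ requires
\[
L_{max} = g(w_1) + \frac{2-d}{2}, \qquad g(w_1) \equiv \frac{w_1 - s}{w_1^2 + 1}, \qquad s \equiv \frac{x_0 w_0}{2},
\]
to satisfy $L_{max} > \ell$ strictly (the edge case $L_{max} = \ell_{max}$ is excluded, as it would correspond to a zero-energy state rather than a true bound state). Since $w_0 > 0$ we have $s > 0$, and the whole claim will follow from a sharp upper bound on the single-variable function $g$.

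The next step is to maximize $g$ over $w_1 \in \mathbb{R}$. Setting $g'(w_1)=0$ yields the quadratic $w_1^2 - 2 s w_1 - 1 = 0$, whose relevant root $w_1^{\star} = s + \sqrt{s^2 + 1}$ is the global maximizer. Substituting back and exploiting the algebraic simplification $(w_1^{\star})^2 + 1 = 2\sqrt{s^2+1}\,w_1^{\star}$, one arrives at the clean estimate
\[
g_{max} = \frac{1}{2\bigl(s + \sqrt{s^2+1}\bigr)} < \frac{1}{2},
\]
where the strict inequality uses $s > 0 \Rightarrow s + \sqrt{s^2+1} > 1$. This is the key estimate of the whole argument; the identity that collapses $(w_1^{\star})^2 + 1$ into a multiple of $w_1^{\star}$ itself is the only step that requires a little care, and will be the main (though very minor) obstacle.

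Plugging this bound back into the formula for $L_{max}$ gives $L_{max} < (3-d)/2$ uniformly in $w_1$. For $d \geq 3$ this forces $L_{max} < 0$, hence $\ell_{max} < 0$ and no bound states exist at all (recovering, as a sanity check at $w_1 = 0$, the well-known fact that a purely repulsive Dirac $\delta$ supports no bound states in three or more dimensions). For $d = 2$ the bound reads $L_{max} < 1/2$, so the only non-negative integer that $\ell_{max} = \lfloor L_{max} \rfloor$ can attain is $0$, forcing $\ell = 0$. That this case is actually realized only needs $L_{max} > 0$, i.e.\ $g(w_1) > 0$, which is equivalent to $w_1 > s$ and can be arranged for any $w_0 > 0$ by taking $w_1$ sufficiently large. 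Therefore the only configuration compatible with a positive $w_0$ is $d = 2$, $\ell = 0$, as claimed.
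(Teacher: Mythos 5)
Your proof is correct and follows essentially the same route as the paper: both reduce the claim to bounding $L_{max}$ from Proposition~\ref{res:lmax}, the paper via the elementary estimate $2w_1/(1+w_1^2)\le 1$ together with dropping the negative term $-x_0w_0/(w_1^2+1)$, while you perform an exact maximization of $g(w_1)=(w_1-s)/(w_1^2+1)$ yielding the sharper constant $1/\bigl(2(s+\sqrt{s^2+1})\bigr)<1/2$. The resulting bound $L_{max}<(3-d)/2$ and the case split $d\ge 3$ versus $d=2$, $\ell=0$ coincide with the paper's argument, and your realizability condition $w_1>s$ matches the paper's remark that $2w_1>x_0w_0$ guarantees the bound state.
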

\begin{proof}
From Proposition \ref{res:lmax} we conclude that
\begin{equation*}
L_{max} = \frac{1}{2} \left(2-d-\frac{x_0\, w_0}{w_1^2+1}+\frac{2 w_1}{w_1^2+1}\right)\leq 1/2 \qquad \text{if} \qquad w_0\geq 0,
\end{equation*}
since $2w_1/(1+w_1^2)\in[-1,1]$  $w_1\in\mathbb{R}$.
Therefore,  bound states with $\ell\geq 1$ are not physically admissible.
For higher dimensions this state can not be achieved since
\begin{equation*}
L_{max}\leq 0 \qquad \text{if} \qquad d\geq3.
\end{equation*}
The equality is reached only if $w_0=0$, $w_1=1$ and $d=3$ being $\ell_{max} = L_{max} =0$. In this  case the selfadjoint extension of $\hvu$ which defines the potential $V_{\delta\text{-}\delta'}$ can not be characterised in terms of the matching conditions \eqref{eq:MatchingR}. In conclusion, with $V_{\delta\text{-}\delta'}$ described by (\ref{eq:MatchingR}) this bound state appears only if $d=2$ and $\ell=0$.

\end{proof}
It is of note that the condition $2\,w_1 >{x_0\, w_0}$ ensures the existence of this bound state for arbitrary $w_0>0$. In addition, we must mention that the appearance of such bound state is significant because of two reasons. In one dimension, and with the definition of the $\delta'$ given by (\ref{eq:MatchingU}), this potential can not introduce bound states by itself \cite{gadella2009bound}. Furthermore, when $w_0>0$ the Dirac-$\delta$ potential $w_0\delta(x-x_0)$ can be interpreted as  an infinitely thin barrier, which contributes to the disappearance of bound states from the system. The result from Proposition \ref{res:newBoundState} is illustrated in Fig.\ref{fig:lmax2D3D}. At the end of the next section we will compute some numerical results that point out more differences with respect to the one dimensional $V_{\delta\text{-}\delta'}$ potential.
\begin{figure}[h!]
    \centering
    \begin{subfigure}[b]{0.544\textwidth}
        \includegraphics[width=1\textwidth]{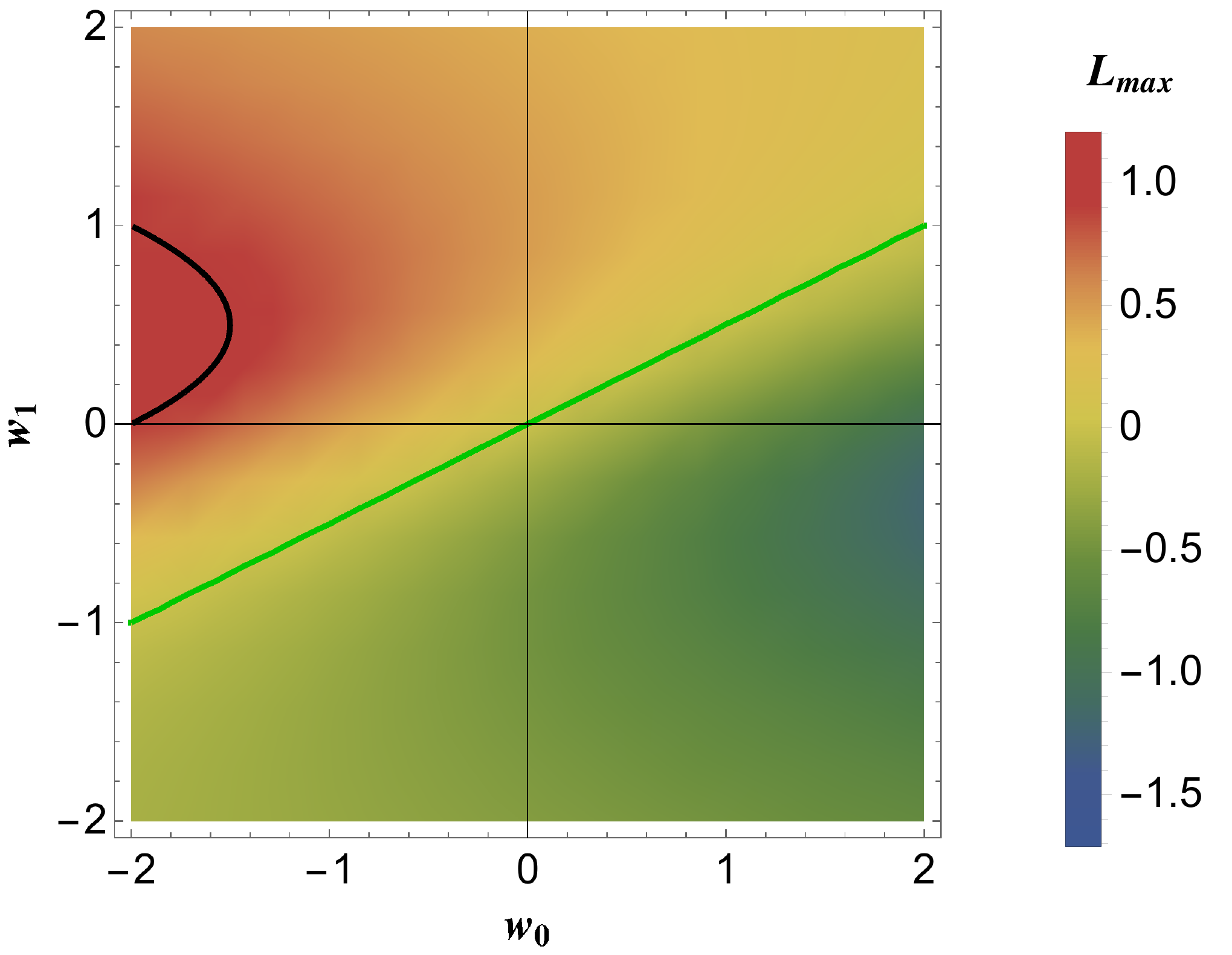}
    \label{fig:lmax2D}
    \end{subfigure}
    \begin{subfigure}[b]{0.445\textwidth}
        \includegraphics[width=1\textwidth]{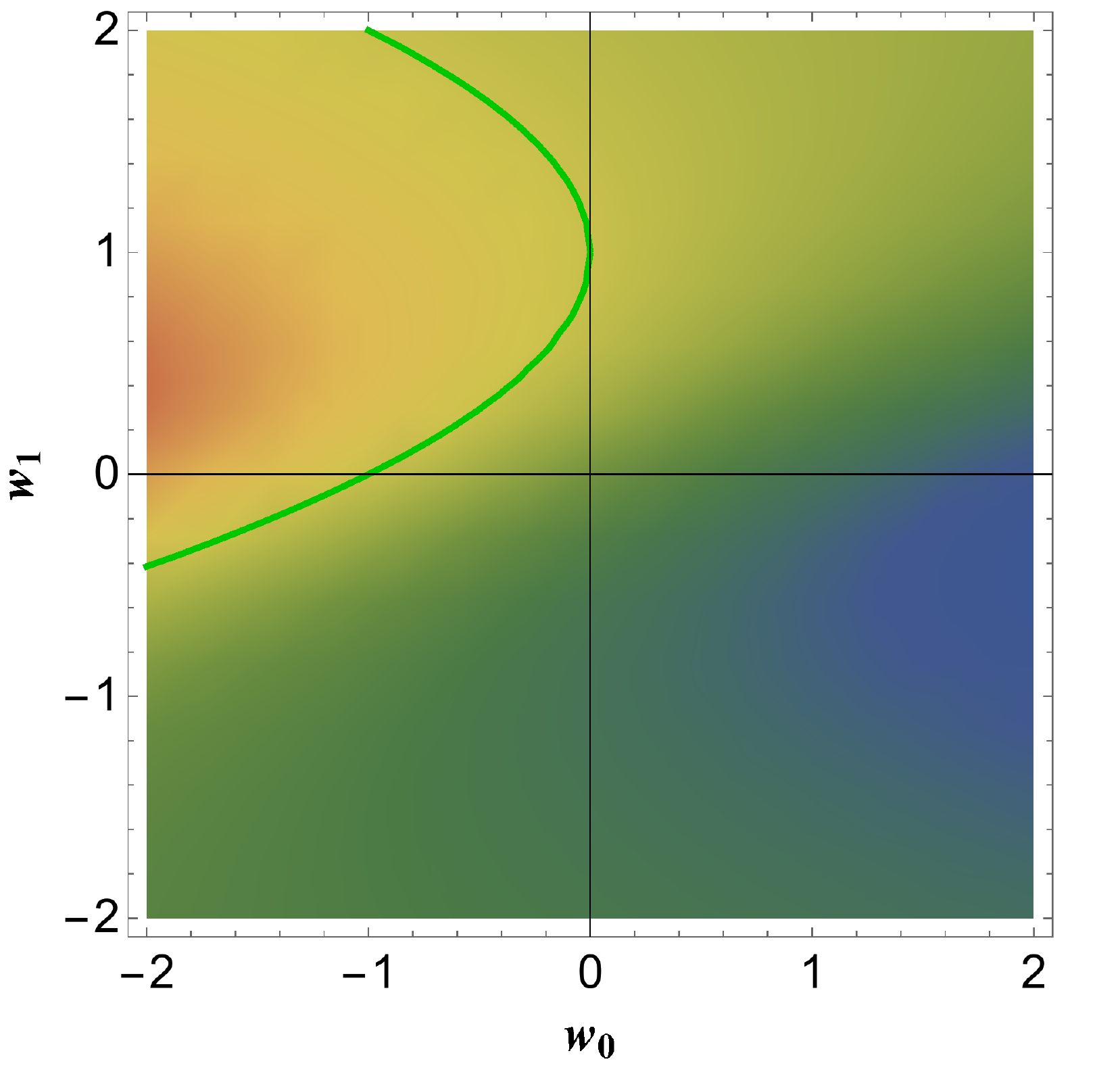}
    \label{fig:lmax3D}
    \end{subfigure}
    \caption{Plots of  $L_{max}$  \eqref{eq:lmax}, with $x_0=1$, as a function of $w_0$  and $w_1$. LEFT: $d=2$ showing  {{ $L_{max} = 0$}} (green line) and  {{$L_{max} = 1$}} (black curve). RIGHT: $d=3$ showing {{$L_{max} = 0$}} (green curve). There is a bound state with $\ell=0$ in two dimensions  for $w_0>0$, but this is not the case  in  three or higher dimensions.}
    \label{fig:lmax2D3D}
\end{figure}

\section{Scattering states, zero-modes, and some numerical reuslts}\label{zmode-sec}

\subsection{Scattering States}\label{sec:Scattering}

To complete the general spectral study of the potential \eqref{eq:PotentialDeltaDeltap} it is necessary to characterise its positive energy states, i.e., the scattering states. These states are always present in the system weather there exist negative energy states or not. In addition, when the parameters $w_0$ and $w_1$ are such that the potential $V_{\delta\text{-}\delta^\prime}$ does not admit bound states, the Schr\"odinger Hamiltonian \eqref{5} can be re-interpreted as the one particle states operator of an effective quantum field theory (see e.g. \cite{munoz2015delta,JMG} and references therein), where the scattering states are the one particle states of the scalar quantum vacuum fluctuations produced by the  field. With this interpretation, the explicit knowledge of the scattering states, specially the phase shifts, enables to obtain one loop calculations of the quantum vacuum energy acting on the internal and external wall of the singularity at $x=x_0$ \cite{kirstenbordag2014jpa}.
In this case, defining\footnote{By using this definition we recover the usual relation between $k$ (scattering states) and $\kappa$ (bound sates): $k\to i\kappa$ as we go from $\lambda>0$ to $\lambda<0$.} $k=\sqrt{\lambda}>0$, the general solution of  \eqref{eq11} is 
\begin{eqnarray}\label{eq:RadialScattering}
{R}_{k\ell}(x)   =\begin{cases}A_1\,\mathcal{J}_{\ell}(kx) +B_1\,\mathcal{Y}_{\ell}(kx) \quad \text{if}\quad x\in (0, x_0),\\[0.5ex] 
A_2\,\mathcal{J}_{\ell}(kx) +B_2\,\mathcal{Y}_{\ell}(kx) \quad \text{if}\quad x\in (x_0, \infty),
\end{cases}
\end{eqnarray}
being $\mathcal{J}_{\ell}(z)$ and $\mathcal{Y}_{\ell}(z)$, up to a constant factor, the  hyperspherical Bessel functions of the first and second kind respectively
\begin{equation*}
\mathcal{J}_{\ell}(z)\equiv \dfrac{1}{z^{\nu}} J_{\ell+\nu}(z), \quad \mathcal{Y}_{\ell}(z)\equiv \dfrac{1}{z^{\nu}} Y_{\ell+\nu}(z).
\end{equation*}

\begin{prop}
The phase shift $\delta_\ell(k)$ for the $\ell$-wave in a $d$-dimensional system described by a central potential with finite support $V$ is given by
\begin{equation}\label{eq:TangentPhaseShiftsGeneral}
\tan{\delta_\ell(k,V)} = -{B_{ext}}/{A_{ext}},
\end{equation}
where $A_{ext}$ and $B_{ext}$ are defined from the asymptotic behaviour of the radial function as
\begin{equation}
{R}_{k\ell}(x)\underset{x\to\infty}{\sim} x^{\frac{1-d}{2}}\left(A_{ext} \cos \mu_\ell + B_{ext} \sin \mu_\ell \right),\quad \mu_\ell \equiv k x-\frac{\pi }{2}  (\ell+\nu+\frac{1}{2}).
\end{equation}
\end{prop}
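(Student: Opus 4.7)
The plan is to exploit the finite support of $V$: beyond the support of $V$, the radial equation reduces to the free radial equation, whose two linearly independent solutions are $\mathcal{J}_\ell(kx)$ and $\mathcal{Y}_\ell(kx)$. Hence in the exterior region the wave function must admit the form
\begin{equation*}
R_{k\ell}(x) = A_{ext}\,\mathcal{J}_\ell(kx) + B_{ext}\,\mathcal{Y}_\ell(kx),
\end{equation*}
which identifies $A_{ext}$ and $B_{ext}$ as the coefficients appearing in the statement once the asymptotic form is extracted.

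The next step is to substitute the large-argument asymptotics of the ordinary Bessel functions, namely
\begin{equation*}
J_{\ell+\nu}(z) \sim \sqrt{\tfrac{2}{\pi z}}\cos\!\left(z-\tfrac{\pi}{2}(\ell+\nu+\tfrac12)\right),\qquad
Y_{\ell+\nu}(z) \sim \sqrt{\tfrac{2}{\pi z}}\sin\!\left(z-\tfrac{\pi}{2}(\ell+\nu+\tfrac12)\right),
\end{equation*}
into the definitions $\mathcal{J}_\ell(z) = z^{-\nu} J_{\ell+\nu}(z)$ and $\mathcal{Y}_\ell(z) = z^{-\nu} Y_{\ell+\nu}(z)$. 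With $z=kx$ and $\nu=(d-2)/2$ one has $z^{-\nu}\cdot z^{-1/2}\propto x^{-(\nu+1/2)} = x^{(1-d)/2}$, so after collecting a common normalization constant the exterior wave function becomes precisely
\begin{equation*}
R_{k\ell}(x)\,\underset{x\to\infty}{\sim}\,x^{\frac{1-d}{2}}\bigl(A_{ext}\cos\mu_\ell + B_{ext}\sin\mu_\ell\bigr).
\end{equation*}

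Finally, one invokes the standard definition of the phase shift: the scattered wave differs from the free incoming radial wave ($B_{ext}=0$) only by a shift $\delta_\ell$ in the argument of the cosine, so the asymptotic form must be proportional to $\cos(\mu_\ell+\delta_\ell)$. Using the addition formula $\cos(\mu_\ell+\delta_\ell)=\cos\delta_\ell\,\cos\mu_\ell-\sin\delta_\ell\,\sin\mu_\ell$ and comparing coefficients yields $A_{ext}\propto\cos\delta_\ell$ and $B_{ext}\propto -\sin\delta_\ell$, hence $\tan\delta_\ell(k,V)=-B_{ext}/A_{ext}$.

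The only delicate point is bookkeeping: one has to verify that the prefactors $z^{-\nu}$ and $z^{-1/2}$ combine consistently to give the announced $x^{(1-d)/2}$ power and that the same overall constant multiplies both the cosine and the sine (so that it cancels in the ratio). Everything else is mechanical substitution of known asymptotics; there is no subtle analytic obstruction because the finite-support hypothesis removes any complication from the long-range behaviour of $V$.
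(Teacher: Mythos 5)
Your proof is correct and follows essentially the same route as the paper: write the exterior solution as a combination of $\mathcal{J}_\ell$ and $\mathcal{Y}_\ell$, insert their large-argument asymptotics to produce the $x^{(1-d)/2}(A_{ext}\cos\mu_\ell+B_{ext}\sin\mu_\ell)$ form, and compare with the partial-wave asymptotic $\propto\cos(\mu_\ell+\delta_\ell)$ to read off $\tan\delta_\ell=-B_{ext}/A_{ext}$. The extra bookkeeping you include on the powers of $z^{-\nu}z^{-1/2}$ is a harmless elaboration of what the paper leaves implicit.
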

\begin{proof}
Far away from the origin the central potential  is identically zero, consequently the scattering solution will be a linear combination of $\mathcal{J}_{\ell}(kx)$ and $\mathcal{Y}_{\ell}(kx)$ which satisfy
\begin{eqnarray*}
\mathcal{J}_{\ell}(kx) \underset{x\to\infty}{\sim}  \sqrt{\frac{2}{\pi }} (kx)^{\frac{1}{2}-\frac{d}{2}} \cos \mu_\ell ,\quad
\mathcal{Y}_{\ell}(kx)\underset{x\to\infty}{\sim} \sqrt{\frac{2}{\pi }} (kx)^{\frac{1}{2}-\frac{d}{2}} \sin \mu_\ell .
\end{eqnarray*}
On the other hand, from partial wave analysis, the asymptotic behaviour of ${R}_{k\ell}(x)$ is proportional to $\cos \left(\mu_\ell + \delta_\ell \right)$ \cite{taylor-scatt},
where $\delta_\ell$ is the phase shift for the $\ell$-wave. Gathering both equations, 
\begin{eqnarray*}
A_{ext} \cos \mu_\ell + B_{ext} \sin \mu_\ell = C_{ext}\, \cos \left(\mu_\ell + \delta_\ell \right),
\end{eqnarray*}
from which the result \eqref{eq:TangentPhaseShiftsGeneral} is obtained.
\end{proof}
The previous result can be easily generalised to central potentials satisfying $x^2\,{V(x)}{}\to 0$ as $x\to\infty$ (see \cite{taylor-scatt}).
For the potential $V_{\delta\text{-}\delta^\prime}$, the square integrability condition on the radial wave function in any finite region sets $B_1=0$, except for $d=2$ and $\ell=0$ where the argument developed in Section \ref{sec:BS}, imposes $B_1=0$ \cite{berry1973semiclassical}.
In this way, using \eqref{eq:MatchingR} and \eqref{eq:RadialScattering} the exterior coefficients  $\{A_2,\,B_2\}$ can be expressed as
\begin{eqnarray}\label{eq:ExteriorCoefficients}
&&\left(
\begin{array}{c}
 A_2 \\
 B_2 \\
\end{array}
\!\right)= A_1 \left(
 \begin{array}{cc}
 \mathcal{J}_{\ell}(kx_0) & \mathcal{Y}_{\ell}(kx_0) \\ [0.5ex]
 k \,\mathcal{J}'_{\ell}(kx_0) & k \,\mathcal{Y}'_{\ell}(kx_0) \nonumber  \\[0.5ex]
\end{array}
\right)^{-1}\left(
 \begin{array}{cc}
 \alpha & 0 \\ [0.5ex]
 {\tilde \beta}  & \alpha^{-1} \nonumber  \\[0.5ex]
\end{array}
\right) \left(\!
\begin{array}{c}
 \mathcal{J}_{\ell}(kx_0) \\ 
k\mathcal{J}'_{\ell}(kx_0) \\
\end{array}
\!\right)\!\\
&&=\frac{1}{2 k} \pi  (k x_0)^{d-1}A_1 \left(
 \begin{array}{cc}
 k\mathcal{Y}'_{\ell}(kx_0) & -\mathcal{Y}_{\ell}(kx_0) \\ [0.5ex]
 -k \,\mathcal{J}'_{\ell}(kx_0) & \mathcal{J}_{\ell}(kx_0)  \nonumber  \\[0.5ex]
\end{array}
\right)\left(
 \begin{array}{cc}
 \alpha & 0 \\ [0.5ex]
 {\tilde \beta}  & \alpha^{-1} \nonumber  \\[0.5ex]
\end{array}
\right) \left(\!
\begin{array}{c}
 \mathcal{J}_{\ell}(kx_0) \\ 
  k\mathcal{J}'_{\ell}(kx_0) \\
\end{array}
\!\right)\!.
\end{eqnarray} 
From this result and \eqref{eq:TangentPhaseShiftsGeneral} we get
\begin{equation}\label{38}
\tan{\delta_\ell(k,V_{\delta\text{-}\delta^\prime})}=-\frac{\mathcal{J}_{\ell}(kx_0) \left(\left(1-\alpha ^2\right) k \mathcal{J}'_{\ell}(kx_0)+\alpha  {\tilde \beta} \mathcal{J}_{\ell}(kx_0)\right)}{- k  \mathcal{J}^\prime_{\ell}(kx_0) \mathcal{Y}_{\ell}(kx_0)+\mathcal{J}_{\ell}(kx_0) \left(\alpha^2 k\mathcal{Y}'_{\ell}(kx_0)-\alpha  {\tilde \beta} \mathcal{Y}_{\ell}(kx_0)\right)}.
\end{equation}
In the spherical wave basis, the scattering matrix is diagonal and its eigenvalues can be written as
\begin{equation}\label{smat}
\exp\left({2i\delta_\ell(k,V_{\delta\text{-}\delta^\prime})}\right)=({1+2 i \tan \delta_\ell-\tan ^2\delta_\ell  })/({1 +\tan ^2\delta_\ell}).
\end{equation}
Note that for the potential $V_{\delta\text{-}\delta^\prime}$, the secular equation \eqref{eq30} can be re-obtained as the positive imaginary poles of \eqref{smat} using \eqref{38} (for details see \cite{taylor-scatt}). 

To complete this section, let us show explicit formulas of the phase shift for some particular cases of the potential $V_{\delta\text{-}\delta^\prime}$ previously studied in the literature:
\begin{itemize}
\item The $\delta$-potential ($w_1=0,\tilde{\beta}=w_0$) phase shift is
\begin{equation*}
\tan{\delta_\ell(k,V_{\delta})}=\frac{\pi \, w_0\, x_0\, J_{\ell +\nu }\left(k\,x_0\right){}^2}{\pi \,  w_0\, x_0\, J_{\ell +\nu }\left(k\,x_0\right) Y_{\ell +\nu }\left(k\,x_0\right)-2},
\end{equation*}
which matches for $d=2,3$ with the results obtained in \cite{lapidus1986scattering} and \cite{kirstenbordag2014jpa} respectively. 
\item The hard hypersphere defined as
\begin{equation*}
V_{hh}(x)=\begin{cases}
\infty, & {x\leq x_0},\\
0, & {x>x_0},
\end{cases}
\end{equation*}
imposes Dirichlet boundary conditions for the wave function  on the exterior region, $R(x_0^+) = 0$. The same result can be obtained from the $\delta\text{-}\delta^\prime$ potential setting  $w_1 \to -1$  (\ref{eq:RobinDirichlet}). Thus, the phase shift is
\begin{equation*}
\tan{\delta_\ell(k,V_{hh})}=\lim_{w_1\to-1}\tan{\delta_\ell(k,V_{\delta\text{-}\delta^\prime})}=\frac{J_{\ell +\nu }\left(k x_0\right)}{Y_{\ell +\nu }\left( k x_0 \right)}.
\end{equation*}
For two and three dimensional systems it coincides with \cite{lapidus1986scattering} (hard circle) and \cite{huang1957quantum} (hard sphere) respectively. 
\item When we turn off the Dirac-$\delta$ term ($w_0=0\Rightarrow\beta=0$) we have that there is an effective $\delta$ potential coupling characterised by
\begin{equation*}
{\tilde \beta}=-\frac{(\alpha-\alpha^{-1})(d-1)}{2x_0},
\end{equation*}
therefore from \eqref{38} we obtain that the phase shift for the pure $\delta^\prime$ is
\begin{equation*}
\tan{\delta_\ell(k,V_{\delta^\prime})}=
-\frac{\left(1-\alpha ^2\right) \mathcal{J}_{\ell}(z_0) ((d-1) \mathcal{J}_{\ell}(z_0)+2z_0 \mathcal{J}'_{\ell}(z_0) )}{\left(\alpha
   ^2-1\right) (d-1) \mathcal{J}_{\ell}(z_0) \mathcal{Y}_{\ell}(z_0)+2 z_0\left( \alpha ^2  \mathcal{Y}'_{\ell}(z_0)\mathcal{J}_{\ell}(z_0) -\mathcal{J}'_{\ell}(z_0)  \mathcal{Y}_{\ell}(z_0)\right)},
\end{equation*}
where $z_0\equiv kx_0$. As can be seen, $\delta_\ell(k,V_{\delta^\prime})$ depends on the energy through $z_0$ unlike it happens with the scattering amplitudes for the pure $\delta^\prime$ potential in one dimension, where there is no dependence on the energy \cite{gadella2009bound,munoz2015delta,gadella-jpa16}. Nevertheless, what is maintained is the conformal invariance of the system, i.e., the phase shift is invariant under
\begin{equation}
x_0\to\Lambda x_0,\quad k\to\frac{k}{\Lambda}, \quad w_1\to w_1.
\end{equation}
\end{itemize}

\subsection{On the existence of zero-modes}
In this section we will deduce the conditions which ensure the existence of states with zero energy for the $\delta\text{-}\delta^\prime$ potential. The presence of an energy gap between the discrete spectrum of negative energy levels and the continuum spectrum of positive energy levels is of great importance in some areas of fundamental physics (see, e.g. \cite{qft-math}), specially when we promote non-relativistic quantum Hamiltonians to effective quantum field theories under the influence of a given classical background.
To start with, we solve (\ref{eq:ReducedRadialEq}) for $\lambda= 0$
\begin{equation}\label{eq:ReducedRadialEqZeroMode}
\left[ \dfrac{d^2}{dx^2}\!  -\frac{(2-\eta) (4-\eta)}{4 x^2} \right]u_{0\,\ell}(x)=0\quad\text{with} \quad \eta \equiv 5 - ( d + 2\ell).
\end{equation}
The general solution of the zero-mode differential equation is given by
\begin{equation}\label{eq:SolutionZeroMode}
v_\eta(x)\equiv u_{0\,\ell}(x)=\begin{cases}c_1 \, x^{\frac{\eta-2}{2}} + c_2 \, x^{\frac{4-\eta}{2}} & \text{if}  \quad \eta\neq 3,
\\[0.5ex] 
c_1 \sqrt{x} + c_2 \sqrt{x}\, \log x \quad  & \text{if}  \quad\eta=3.
\end{cases}
\end{equation}
It must be emphasized that $\eta=3$ corresponds to  $d=2$ and $\ell=0$.
In order to determine the integration constants of the general solution \eqref{eq:SolutionZeroMode}  we must impose two requirements. The first condition is 
square integrability 
\begin{equation}\label{eq:ZeroModeIntegrability}
\int_0^{\infty} |v_\eta(x)|^2 \, dx=\int_0^{x_0} |v_\eta(x)|^2 \, dx + \int_{x_0}^{\infty} |v_\eta(x)|^2 \, dx < \infty,
\end{equation}
where both integrals should be finite. The second one is the matching condition that defines the $\delta\text{-}\delta^\prime$ singular potential \eqref{eq:MatchingU}. Depending on $\eta$, i.e., the angular momentum $\ell$ and the dimension of the physical space $d$, we can distinguish two cases.

\noindent {\bf Case 1:} $\boldsymbol{\eta \in \{1,2, 3\}}.$
After imposing \eqref{eq:ZeroModeIntegrability} we end up with the reduced radial wave functions
\begin{eqnarray*}
v_\eta(x)=
\begin{cases}
 \sqrt{x}\,(c_1  + c_2\, \, \log x)& \text{if}\quad \eta=3,    \\[0.5ex]
 \hfil c_1+c_2\,x& \text{if}\quad \eta=2, \\[0.5ex]
 \hfil c_1 x^{3/2}& \text{if}\quad \eta=1,
\end{cases}
\quad \text{for} \quad x<x_0 \quad \text{and} \quad v_\eta(x) = 0 \ \  \text{for} \ \  x>x_0.
\end{eqnarray*}
In this case the matching conditions of (\ref{eq:MatchingU}) are satisfied if, and only if, $c_1=c_2=0$. Therefore there are no zero energy states.

\noindent {\bf Case 2:} $\boldsymbol{\eta \leq 0}.$ In this situation, the square integrable solution  and matching conditions  result in
\begin{equation}\label{52}
v_\eta(x) = \begin{cases}
c_2\,x^{\frac{4-\eta}{2}} & \ {x< x_0},\\[0.5ex]
c_1\,x^{\frac{\eta-2}{2}} &\ {x>x_0},
\end{cases};
\ \  
c_1 \left(
\begin{array}{c}
 x_0^{\frac{\eta-2}{2}} \\
 \frac{\eta-2}{2}  x_0^{\frac{\eta-2}{2}-1}  \\
\end{array}
\right)=c_2\left(
\begin{array}{cc}
 \alpha  & 0 \\
 \beta  & {\alpha^{-1} } \\
\end{array}
\right)\left(
\begin{array}{c}
 x_0^{\frac{4-\eta}{2}}  \\
 \frac{4-\eta}{2} x_0^{-\frac{\eta-2}{2}} \\
\end{array}
\right)\!.
\end{equation}
A non trivial solution exists if, and only if, the system satisfies 
\begin{equation}\label{eq:BetaNonTrivial}
\beta = \frac{-2 \alpha ^2+\alpha ^2 \eta+\eta-4}{2 \alpha  x_0}\quad \text{with}\quad c_2= {x_0^{\eta-3}}\alpha^{-1}c_1.
\end{equation}
In addition, the regularity condition at $x=0$ is also satisfied: $v_\eta(x=0)=0.$
Indeed, if the previous equation is inserted in \eqref{eq:lmax} we obtain
\begin{equation*}
L_{max}=\ell_{max}=\ell,
\end{equation*}
which is in agreement with our previous analysis of the energy levels, i.e., if $L_{max}=\ell_{max}$ the left hand side of the secular equation \eqref{eq30}, $F(\kappa_\ell \, x_0)$, reaches the right hand side at $\kappa_\ell = 0$. The reverse is also true.
\subsection{The mean value of the position operator}
In this section we will show some numerical results concerning the expectation value of $x$ for the bound states that satisfy $\eta< 0$  (as a function of the parameters $w_0$ and $w_1$). Once the dimension $d$, the radius $x_0$ and the angular momentum $\ell$ are fixed, the plane $w_0$-$w_1$ is divided into two zones: one in which the bounds states do not exist and another one in which they do. The limit between these two zones corresponds to the zero-mode states\footnote{This is ensured by the condition $\eta\leq 0$. If $\eta>0$ the limit between the two zones does not correspond to a physically meaningful state as it was previously demonstrated.}. The existence  of zero-modes is of critical importance to compute numerically the expectation value of the dimensionless radius $x$ when the parameters $w_0$ and $w_1$ are close to the common boundary of the regions mentioned.

For a given bound state of energy $\lambda_\ell=-\kappa_\ell^2$, the general expression for the expectation value $
\langle x\rangle_{\kappa\ell}\equiv \langle \Psi_{\kappa\ell} |\, x \,|\Psi_{\kappa\ell}\rangle  
$
is given in terms of the reduced radial wave function as
\begin{equation}\label{47}
\hspace{-0.1cm}\langle x\rangle_{\kappa\ell}= 
\dfrac{1}{\kappa_\ell}\dfrac{\displaystyle\int_0^{\kappa_\ell\,x_0} \, z^2 I^2_{\ell+\nu}( z)dz +\left(\dfrac{\alpha\,I_{\ell+\nu}( \kappa_\ell x_0)}{K_{\ell+\nu}(\kappa_\ell x_0)}\right)^2 \displaystyle\int_{\kappa_\ell\,x_0}^{\infty} \, z^2 K^2_{\ell+\nu}(z)dz}{\displaystyle\int_0^{\kappa_\ell\,x_0} \, z I^2_{\ell+\nu}(z)dz + \left(\dfrac{\alpha\,I_{\ell+\nu}(\kappa_\ell x_0)}{K_{\ell+\nu}(\kappa_\ell x_0)}\right)^2\displaystyle\int_{\kappa_\ell\,x_0}^{\infty} \, z K^2_{\ell+\nu}(z)dz}.
\end{equation}
The last expression does not depend explicitly on $\beta$  \eqref{eq:Notation}, but it does through $\kappa_\ell$. If we take the limit $\kappa_\ell\to 0^+$ we obtain
\begin{equation*}\label{eq:limitkappato0}
\dfrac{\langle x\rangle_{0\ell}}{x_0}\equiv\lim_{\kappa_\ell\to 0^+}\dfrac{\langle x\rangle_{\kappa\ell}}{x_0}=
\begin{cases}
\dfrac{\eta-1}{\eta}\left( 1-\left|\dfrac{2 (\eta-3)}{(\eta-6)  \left(\alpha ^2 (\eta-5)+\eta-1\right)}\right|\right) &\quad \eta\leq -1,\\[2.5ex] 
\hfil \infty& \quad  \eta\in\{0,1,2,3\}.
\end{cases}
\end{equation*}
As expected, this result coincides with the calculation of the mean value for the zero-modes, carried out with the wave functions in \eqref{52}.
As can be seen, when there exist zero-modes with $\eta<0$, the expectation value $\langle x\rangle_{0\ell}$ is finite, but when the system does not admit them, or $\eta=0$ the limit $\langle x\rangle_{0\ell}$ is divergent. Somehow, the zero-modes with $\eta=0$ are semi-bound states in the sense that the expectation value is divergent. This behaviour gives rise to three different situations:
\begin{itemize}
\item When there are zero-modes with $\eta<0$, the mean value $\langle x\rangle_{\kappa\ell}$ for the bound states has a finite upper bound 
\begin{equation}\label{56}
\lim_{w_1\to1}\frac{\langle x\rangle_{0\ell}}{x_0}=({\eta-1})/{\eta}.
\end{equation}
\item If there is a semi-bound zero-mode, i.e., $\eta=0$, the upper bound imposed by $\langle x\rangle_{0\ell}$ is infinite: $\langle x\rangle_{\kappa\ell}$ diverges as $\lambda_\ell\to 0^-$ in the $w_0$-$w_1$ plane.
\item When there are no zero-modes, $\langle x\rangle_{\kappa\ell}$ does not have an upper bound  and therefore, as $\lambda_\ell$ goes to zero  the expectation value goes to infinity. This fact can be interpreted as the state disappearing from the system: when $\lambda_\ell\to 0^-$ the corresponding wave function becomes identically zero.
\end{itemize}
In Fig.\ref{fig:MeanValuex} we have plotted the mean value of two configurations as a function of the couplings $w_0$ and $w_1$ for values of $d$ and $\ell$ such that $\eta<0$ (there is a zero-mode). We have distinguished the region in which the expectation value of $x$ lies outside the $\delta\text{-}\delta'$ horizon and the one with $\langle x\rangle<x_0$. The former, bearing in mind the original ideas by G. 't Hooft \cite{thooft1,thooft2},  would correspond to the states of quantum particles falling into a black hole that would be observed by a distant observer. Indeed, the amount of bound states for two and three dimensions is proportional to $x_0$ and $x_0^2$ respectively and as it is mentioned\footnote{Although the formulas for $\ell_{max}$ presented in \cite{govindarajan2016modelling} are only valid when \eqref{eq:X} is satisfied, the behavior of the total amount of bound states as a function of $x_0$ does not change (as long as $x_0$ is large enough). Consequently, the argument for the area law remains valid.} in \cite{govindarajan2016modelling}  these bound states would give an area law for the corresponding entropy in quantum field theory when they are interpreted as micro-states of the black hole horizon.
\begin{figure}[h!]
    \centering   
        \includegraphics[width=1\textwidth]{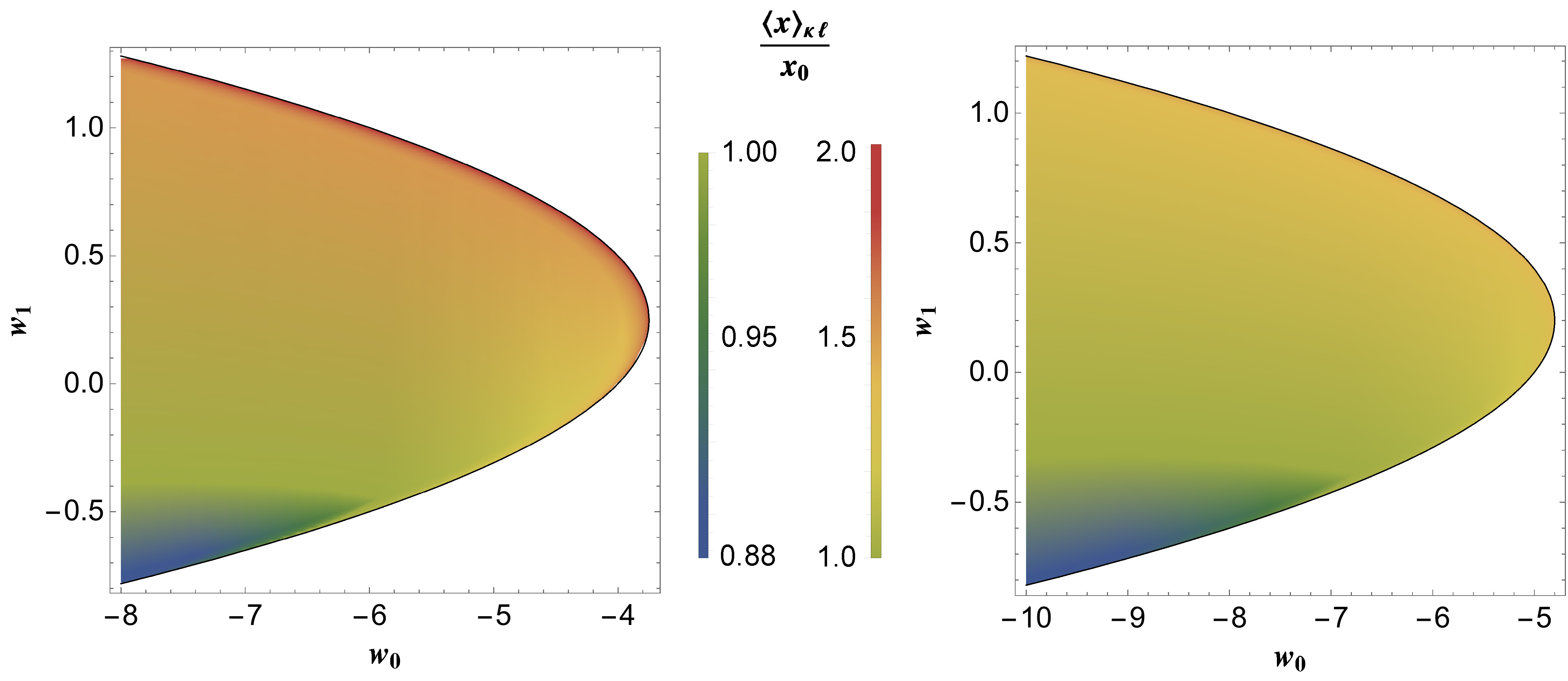}
    \caption{Mean value of the dimensionless radius operator $\langle x\rangle/x_0$ given in \eqref{47}. LEFT: $x_0 = 1,\  \ell= 2 \ \text{and}\ d=2$ being $\eta = -1$. RIGHT: $x_0 = 1,\  \ell= 2 \ \text{and}\ d=3$ being $\eta = -2$. The limit $\kappa\to 0^+$ in \eqref{47} fits with \eqref{eq:limitkappato0}. The black curve satisfies $L_{max} = \ell$ in each case.}
    \label{fig:MeanValuex}
\end{figure}
\subsubsection*{On the energy shifts produced by the $\delta'$}  It is worth mentioning two  central differences between the present analysis  in arbitrary dimension with the hyperspherical
 $\delta$-$\delta'$ potential ($d\geq 2$) and the one dimensional point analog \cite{gadella2009bound}.
 In the latter, the $\delta'$ by itself ($w_0=0$) only gives rise to a pure continuum spectrum of positive energy levels (scattering states). In addition, for the one dimensional case when $w_0<0$ the appearance of the $\delta^\prime$-term in the potential increases the energies of the bound states because it breaks parity symmetry, which does not happen for $d\geq 2$. These two properties are not maintained in general for $d\geq 2$. For example, in two
 dimensions there is a bound state with energy $\lambda_{\ell=0}=-1.205$ if $w_1=0.9$ ($x_0=0.15$ and, of course,
 $w_0=0$). Secondly, the previous case and all the study of Section \ref{sp-feat} prove that there are bound states with lower energy when
 the $\delta'$ is added to the $\delta$ potential. In view of the above, it could be thought that it only takes place when
 $w_0\geq 0$ (since the $\delta$ potential presents no bound states). However, if we consider a three dimensional system
 with $x_0=1$ and $w_0=-1.85$, a single bound state with energy $\lambda_{\ell=0}=-0.514$ appears when $w_1=0.437$ and
 with $\lambda_{\ell=0}=-0.482$ if we turn off the $\delta'$. What we can conclude from the numerical results is that the $\delta$-$\delta^\prime$ potential can give rise to a lower energy fundamental state than if it has only the $\delta$ potential for $d\geq 2$.

\section{Concluding remarks}\label{sec:conclu}

Our study provides novel results with $\delta\text{-}\delta^\prime$ hyperspherical potentials. Firstly, on the basis of this paper in arbitrary dimension, a careful study of the applications that we have  already reported
(and others) can be performed. The special attention paid on bound states is justified: as was shown in \cite{govindarajan2016modelling} the bound states can be thought of, in a quantum field theoretical view,  as photon states falling into a black hole for an observer far away from the event horizon. In this sense, the $\delta\text{-}\delta^\prime$ potential generalises the brick wall model by G. 't Hooft \cite{thooft1,thooft2}. In addition, the knowledge of the bound state spectrum of the system plays an essential role in the study of fluctuations around classical solutions and in the Casimir effect when the Sch\"odinger operator $-\Delta_d+V_{\delta\text{-}\delta^\prime}$ is reinterpreted as the one particle Hamiltonian of an effective quantum field theory. 

Our first achievement is the generalisation of the results given in \cite{kurasov1996distribution} for the one dimensional $\delta^\prime$-potential. We have introduced a rigorous and consistent definition of the potential $V_{\delta\text{-}\delta^\prime}=w_0\delta(x-x_0)+2w_1\delta^\prime(x-x_0)$ in arbitrary dimension, characterizing a selfadjoint extension of the  Hamiltonian $\hvu$ \eqref{16} defined on  $\mathbb{R}_{x_0}$. In doing so, we have corrected the matching conditons in \cite{govindarajan2016modelling} for the two and three dimensional $V_{\delta\text{-}\delta^\prime}$ potential. We have shown that the Dirac-$\delta$ coupling requires a re-definition which also depends on the radius $x_0$ and the $\delta'$ coupling $w_1$.

We have also characterised the spectrum of bound states in arbitrary dimension, computing analytically the amount of bound states for any values of the free parameters $w_0$, $w_1$ and $x_0$ that appear in the Hamiltonian. One of the most interesting and counterintuitive results we have found is the existence of a negative energy level for $d=2$ and $\ell=0$ when the Dirac-$\delta$ coupling $w_0$ is positive. In such a situation, the Dirac-$\delta$ potential $w_0\delta(x-x_0)$, with $w_0>0$, is an infinitely thin potential barrier, therefore bound states in the regime $w_0>0$ are not expected (as it happens for the one dimensional analog \cite{munoz2015delta}). 

As a limiting case of the spectrum of bound states for the Hamiltonian \eqref{5},
we have obtained the spectrum of zero-modes of the system in terms of the parameter $\eta=5-(d+2\ell)$. We have shown that the conditions on  $w_0$, $w_1$ and $x_0$ for the existence of zero-modes are
$
\ell_{max}=L_{max}\quad\text{and}\quad\eta\leq 0.
$
In addition, we have  computed numerically the expectation value $\langle x\rangle_{\kappa\ell}/x_0$ for the bound states with energy $\lambda=-\kappa^2$ and angular momentum  $\ell$ as a function of  $w_0$ and $w_1$. This calculation has enabled us to realise that the zero-modes with $\eta<0$ behave as bound states in the sense that
$
\langle x\rangle_{0\ell}<\infty
$, 
and the zero-modes corresponding to $\eta=0$ behave as semi-bound states due to
$
\langle x\rangle_{0\ell}=\infty
$. 
These results  determine the topological properties of the space of states of the system since the existence  of zero-modes characterises the space of couplings.

To complete our study of the Hamiltonian \eqref{5} we have obtained an analytical expression for all the phase shifts which describe all the scattering states of the system. This calculation is of central importance when we promote \eqref{5} to an effective quantum field theory (see \cite{munoz2015delta}) under the influence of a classical background. In this scenario, the knowledge of the phase shifts allows us to compute the zero point energy \cite{kirstenbordag2014jpa}. In addition, as it is shown in \cite{kirstenbordag2014jpa} the phase shifts contain in their asymptotic behaviour all the heat kernel coefficients of the asymptotic expansion of the heat trace.

Further work for the future could usefully be to add a non-singular hyperspherical background potential $V_0(x)$ to the $V_{\delta\text{-}\delta'}(x)$. For example, the spectrum of  $|x|$ plus the  $\delta$-$\delta'$ potential at the origin is studied for one dimensional systems in \cite{fassari18}. For these cases, first of all, we would have to define the selfadjoint extension which characterizes $V_{\delta\text{-}\delta'}$ considering $\hvup\equiv \hvu + V_0(x)$ instead of $\hvu$. If $V_0(x)$ satisfies  the hypothesis of the Kato-Rellich theorem, the selfdadjointness of  $\hvup$ is guaranteed by the selfdadjointness of  $\hvu$ \cite{oliveira09}. In this way, for this kind of potentials it seems  reasonable that  the analysis carried out in Section \ref{sec:BS} can be generalised  by just exchanging the modified hyperspherical Bessel functions ($V_0=0$)  by the corresponding general solutions of the background potential $V_0(x)$. Of course, most  potentials can not be solved analytically \citep{albeverio2005solvable}, but it is worth exploring the (solvable)  Coulomb potential $ V_0(x) = -{\gamma}/{x}$ with $\gamma\in\mathbb{R}_{>0}$ in arbitrary dimension. In addition to its  known applications in a multitude of  disciplines, this potential has recently been shown to play a central role in condensed matter physics to mimic impurities in real graphene sheets and other two dimensional systems \cite{prl-coulomb,coulomb-imp,coulomb-imp2}. For the Coulomb potential, the general solution can be written in terms of Whittaker functions which are closely related to the modified Bessel functions studied in the free case \cite{richard2018radial,olver2010nist}. Some important differences with respect to the latter are expected, e.g.,  an infinite number of negative energy levels ($\ell_{max}\to\infty$) with, possibly, an accumulation point not necessarily at zero energy.

\section*{Acknowledgements}
Partial financial support is acknowledged to the Spanish Junta de Castilla y León and FEDER (Projects VA057U16 and VA137G18) and MINECO (Project MTM2014-57129-C2-1-P). We are very grateful to K. Kirsten, L. Santamar\'\i a-Sanz, M. Gadella, and J. Mateos Guilarte for fruitful discussions. JMMC acknowledges the hospitality and support received from the ETSIAE-UPM at Madrid.


\end{document}